\documentclass[12pt]{article}

\usepackage{header}

\title{\Large{ {\bf Single-Peaked Domain Augmented with Complete Indifference: A Characterization of Target Rules with a Default }}}

\author{Parikshit De\thanks{Economic Sciences, IISER-Bhopal, Bhopal Bypass Road, Bhauri, Bhopal, Madhya Pradesh, India - 462066. Email: \texttt{parikshitde@iiserb.ac.in}},\;Abinash Panda\thanks{Department of Economics, Shiv Nadar Institution of Eminence, NH - 91, Gautam Buddha Nagar, Uttar Pradesh, India - 201314. Email: \texttt{ap280@snu.edu.in}}\; and Anup Pramanik\thanks{Department of Economics, Shiv Nadar Institution of Eminence, NH - 91, Gautam Buddha Nagar, Uttar Pradesh, India - 201314. Email: \texttt{anup.pramanik@snu.edu.in}.}}

\begin{document}

\maketitle
 
\begin{abstract} We study a public decision problem in which a finite society selects a public-good level from a closed interval. Agents either have single-peaked preferences or are completely indifferent over the interval; the latter capture abstention or a “none of the above” stance within the decision process. We study this augmented single-peaked domain. On this domain, we characterize the class of rules called \textit{target rules with a default}. We show that onto-ness and pairwise strategy-proofness characterize this class of rules.
\end{abstract}

\noindent {\sc Keywords}. Social Choice; Single-Peaked Preferences; Complete Indifference; Pairwise Strategy-Proofness; Target Rules with a Default. \\

\noindent {\sc JEL codes}. D71.

\newpage
\section{Introduction}
We study a public decision problem in which a finite society selects a public good level from a closed interval based on agents’ preferences. Agents may either have single-peaked preferences or be completely indifferent over all feasible levels. Completely indifferent agents can be interpreted as abstaining, or as expressing a ``none of the above" (NOTA) position - a refusal to endorse any specific public good level while remaining part of the decision procedure.

We refer to this domain as the single-peaked domain augmented with complete indifference. A social choice function (or a rule) assigns a public-good level to each preference profile. Our main result characterizes a class of rules, which we term as \textit{target rules with a default}. Each such rule specifies a target level and a default level in the interval. At profiles with at least one single-peaked agent, the rule selects the target whenever it is efficient; otherwise, it selects the peak of a single-peaked agent that is closest to the target. When all agents are completely indifferent, the rule selects the default level.

The characterization relies on onto-ness and pairwise strategy-proofness. Onto-ness requires that every feasible public good level be chosen at some preference profile. Pairwise strategy-proofness requires that no coalition of at most two agents can jointly misreport their preferences in a way that makes all coalition members weakly better off, with at least one of them strictly better off. This property is weaker than group strategy-proofness, which excludes profitable manipulations by coalitions of arbitrary size. Pairwise strategy-proofness offers a tractable and behaviorally plausible weakening of group strategy-proofness. Tractability arises because the requirement restricts attention to coalitions of size two, thereby avoiding the combinatorial complexity associated with deviations by arbitrary coalitions. The property is also behaviorally appealing, as coordination among two agents is substantially easier than large-scale coalition formation, which may be informationally or institutionally demanding. Despite this relaxation, the condition remains strong enough to yield a sharp characterization of target rules with a default on the augmented domain.

Our work complements the literature on strategy-proof social choice functions over extensions of the classical single-peaked domain. The study of strategy-proofness on the single-peaked domain was initiated by the seminal contribution of \cite{moulin1980strategy}. Subsequent research established the tops-only property for this domain under various formulations and assumptions; see, for example, \cite{barbera1993generalized}, \cite{barbera1994characterization}, \cite{ching1997strategy}, \cite{weymark2008strategy}, \cite{chatterji2011tops}, and \cite{weymark2011unified}.\footnote{A social choice function satisfies the tops-only property if the chosen alternative depends only on agents' most-preferred alternatives. In the literature it is common to say that a domain exhibits the tops-only property when every strategy-proof social choice function on that domain is tops-only.} A number of papers then considered environments in which the classical domain is enlarged to accommodate indifference. For instance, \cite{berga1998} analyze single-plateaued preferences, where agents may have a flat top rather than a unique peak, while \cite{cantala2004choosing} study a model with an outside option, leading to preferences that are single-peaked on an interval of acceptable alternatives and flat outside that interval. In both the settings, strategy-proofness together with onto-ness does not imply tops-onlyness - a feature that also arises in our framework. Assuming tops-onlyness, \cite{berga1998} characterize strategy-proof rules on the single-plateaued domain, showing that they correspond to \cite{moulin1980strategy}'s min–max rules combined with an appropriate tie-breaking selection within plateaus. By contrast, \cite{cantala2004choosing} assume efficiency and obtain a characterization of strategy-proof rules that are tops-only and form a subclass of Moulin's min–max rules. In our setting, efficiency together with strategy-proofness likewise implies tops-onlyness. Our approach, however, differs in that we retain onto-ness and strengthen strategy-proofness to pairwise strategy-proofness. Under these requirements, we obtain a complete characterization of target rules with a default, which are tops-only, efficient, and anonymous.

Our result is closely related to the literature on solidarity principles in public decision problems with single-peaked preferences. Solidarity requirements state that when the environment changes, agents who are not responsible for the change should be affected in the same direction.\footnote{Solidarity principles originate in the fair division literature. Population monotonicity \cite{thomson1983fair, thomson1983problems} and replacement domination \cite{moulin1987pure} are the two most prominent formulations of this idea.} In public decision problems with single-peaked preferences, such axioms, together with efficiency, lead to target rules - rules indexed by a target alternative that, at each preference profile, select the target whenever it is efficient and otherwise choose the peak closest to the target. For fixed-population environments like ours, \cite{thomson1993replacement} shows that replacement domination together with efficiency characterizes the class of target rules (see also \cite{vohra1999replacement}, \cite{klaus2020solidarity} for extensions). In variable-population settings, \cite{ching1993population} shows that efficiency and population monotonicity characterize the same class. In contrast to this literature, we enlarge the standard single-peaked domain by permitting complete indifference and adopt an incentive-based approach. On this augmented domain, pairwise strategy-proofness and onto-ness characterize the class of target rules with a default, thereby providing an incentive-based foundation that does not rely on solidarity axioms.

It is important to highlight \cite{gordon2007public}, who establishes a deep connection between solidarity principles and incentive properties. In particular, \cite{gordon2007public} studies a public decision problem in a variable-population framework and focuses on efficient social choice functions. He shows that, under replication indifference - a property specific to variable-population models - solidarity axioms are equivalent to group strategy-proofness, and this equivalence holds over a very general preference domain. Even if one combines the results of \cite{gordon2007public} with those of \cite{ching1993population} or \cite{thomson1993replacement}, the resulting implications still do not yield our characterization, for several reasons. First, we work in a fixed-population model, whereas \cite{gordon2007public} operates in a variable-population framework; replication indifference, which plays a central role in his analysis, is inherently tied to variable-population environments and has no direct analogue in our setting. Second, \cite{ching1993population} and \cite{thomson1993replacement} consider the single-peaked domain, whereas we study single-peaked domain augmented with complete indifference. Third, our requirements of onto-ness and pairwise strategy-proofness are weaker than efficiency and group strategy-proofness, respectively. Thus, our characterization provides an independent incentive-based foundation for target rules that does not rely on solidarity axioms.

The remainder of the paper is organized as follows. Section \ref{Model} introduces the model and the basic properties of the rules. Section \ref{MR} presents the main results. Section \ref{WIC} discusses the consequences of weakening pairwise strategy-proofness. Section \ref{Conclusion} concludes. All proofs are relegated to the Appendix.

\section{Model}\label{Model}

Let $A = [0,1]$ be the closed interval of the real line. The elements of $A$ represent feasible levels of a public good and constitute the set of alternatives. Consider a society $N = \{1,2,\ldots,n\}$ with $n\geq 2$ agents who must choose a level of the public good from $A$.

For each agent $i \in N$, let $R_i$ denote agent $i$’s preference over $A$. We assume that $R_i$ is a complete and transitive binary relation on $A$. The strict and indifference components of $R_i$ are denoted by $P_i$ and $I_i$, respectively. Let $\mathcal{R}$ denote the set of all such preferences on $A$. We use $<$ to denote the natural ordering on the real line. A preference $R_i$ is \emph{single-peaked with respect to $<$} if there exists an alternative $\tau(R_i) \in A$, called the \emph{peak} of $R_i$, such that for all $a,b \in A$, if  $\tau(R_i) \leq a < b$  or  $b < a \leq \tau(R_i)$, then  $a\;P_i\; b$. Thus, alternatives closer to the peak are strictly preferred on either side. Let $\mathcal{S} \subset \mathcal{R}$ denote the set of all preferences that are single-peaked with respect to $<$.\footnote{Throughout, $\subset$ denotes strict inclusion, while $\subseteq$ denotes weak inclusion.} Let $R^{0}$ denote the preference on $A$ such that
$a R^{0} b$ for all $a,b \in A$; thus $R^{0}$ represents complete indifference.

The preference domain studied in this paper is $\mathcal{S}_0 = \mathcal{S} \cup \{R^{0}\}$, the common set of admissible preferences for each agent, which we call the \emph{single-peaked domain augmented with complete indifference}. A \emph{preference profile} is an $n$-tuple $R=(R_1,\ldots,R_n)\in \mathcal S_0^{\,n}$. When we want to emphasize agent $i\in N$ or a coalition $S\subseteq N$, we write the profile as 
$(R_i,R_{-i})$ or $(R_S,R_{-S})$, respectively, where $R_S=(R_j)_{j\in S}$ and $R_{-S}=(R_j)_{j\in N\setminus S}$.\footnote{Unless stated otherwise, coalitions are nonempty. We write $i$ for the singleton coalition $\{i\}$ (abusing notation).}
The profile in which every agent has the completely indifferent preference $R^{0}$ is called the \emph{completely indifferent preference profile} and is denoted by $R^{0N}=(R^{0},\ldots,R^{0})$. For any coalition $S\subseteq N$, let $R^{0S}$ denote the profile for agents in $S$ in which every 
agent in $S$ has preference $R^{0}$.

For any preference profile $R \in \mathcal{S}_0^n \setminus \{R^{0N}\}$, define $\tau(R)= \{\, \tau(R_i) \in A \mid R_i \neq R^{0} \,\}$. That is, $\tau(R)$ is the set of peak alternatives of all agents with single-peaked preferences at the profile $R$. Since $R \in \mathcal{S}_0^n \setminus \{R^{0N}\}$, the set $\tau(R)$ is nonempty. For any preference profile $R \in \mathcal{S}_0^n \setminus \{R^{0N}\}$, let $\underline{\tau}(R)= \min \tau(R)$ and $\overline{\tau}(R)= \max \tau(R)$. That is, $\underline{\tau}(R)$ and $\overline{\tau}(R)$ denote, respectively, the smallest and the largest peak among agents with single-peaked preferences at the profile $R$.

We now define social choice functions and the axioms considered in this paper.

\begin{defn}\rm
A social choice function (SCF), or simply a rule, is a mapping
$f : \mathcal{S}_0^n \to A$, which assigns to each preference profile an alternative in $A$.
\end{defn}

We first define incentive compatibility at the individual level.

\begin{defn}\rm 
A social choice function $f$ is \emph{strategy-proof} if for every agent $i \in N$, every preference profile $R \in \mathcal{S}_0^n$, and every alternative preference $R'_i \in \mathcal{S}_0$, it holds that
\[
f(R_i,R_{-i}) \, R_i \, f(R'_i,R_{-i}).
\]
That is, no agent can benefit from misreporting her preference, regardless of the reports of others.
\end{defn}

We next strengthen the incentive requirement to deviations by groups of agents.

\begin{defn}\rm 
A social choice function $f$ is \emph{group strategy-proof} (GSP) if for every coalition $S \subseteq N$ and every preference profile $R \in \mathcal{S}_0^n$, there exists no $R'_S\in \mathcal{S}_0^{|S|}$ such that $f(R'_S,R_{-S}) \, R_i \, f(R_S,R_{-S})$ for all $i \in S$, and $f(R'_S,R_{-S}) \, P_j \, f(R_S,R_{-S})$ for some $j \in S$.
\end{defn}

That is, no coalition can jointly misreport in a way that makes all its members weakly better off and at least one member strictly better off. A weaker requirement restricts attention to deviations by coalitions of at most two agents.

\begin{defn}\rm 
A social choice function $f$ is \emph{pairwise strategy-proof} if the group strategy-proofness condition holds for all coalitions $S \subseteq N$ with $|S| \le 2$.
\end{defn}

We now define efficiency.

\begin{defn}\rm
An alternative $a \in A$ is \emph{efficient} at a preference profile $R \in \mathcal{S}_0^n$ if there exists no alternative $b \in A$, $b \neq a$, such that $b\;R_i\;a$ for all $i \in N$, and
$b\;P_j\;a$ for some $j \in N$. A social choice function $f$ is \emph{efficient} if $f(R)$ is efficient at every $R \in \mathcal{S}_0^n$.
\end{defn}

On $\mathcal{S}_0$, efficiency has a simple structure. For any $R \neq R^{0N}$, the set of efficient alternatives is $[\underline{\tau}(R), \overline{\tau}(R)]$. That is, efficiency requires the chosen outcome to lie between the smallest and the largest peak among agents with single-peaked preferences. At the null preference profile $R^{0N}$, every alternative in $A$ is efficient.

We next introduce a standard richness condition on the range of a rule.

\begin{defn}\rm
A social choice function $f$ is \emph{onto} if for every $a \in A$, there exists a preference profile $R \in \mathcal{S}_0^n$ such that $f(R) = a$.
\end{defn}

Onto-ness is a much weaker requirement than efficiency: while efficiency restricts the outcome selected at every profile, onto-ness merely requires that every alternative be attainable at some profile.

Finally, we introduce the \textit{tops-only} property.

\begin{defn}\rm
A social choice function $f$ satisfies the \emph{tops-only} property if for any two preference profiles
$R, R' \in \mathcal{S}_0^n$ such that
\begin{enumerate}
    \item $\{\, i \in N \mid R_i = R^{0} \,\} = \{\, i \in N \mid R'_i = R^{0} \,\}$, and
    \item $\tau(R_i) = \tau(R'_i)$ for every $i \in N$ with $R_i \neq R^{0}$,
\end{enumerate}
it holds that $f(R) = f(R')$.
\end{defn}

In words, the tops-only property requires that the outcome depend only on the set of agents with completely indifferent preferences and on the peak alternatives reported by the remaining agents, and not on any other features of their preferences.

\section{Main Results}\label{MR}
This section presents the main characterization result of the paper. In particular, we characterize social choice functions on the single-peaked domain augmented with complete indifference that satisfy onto-ness and pairwise strategy-proofness. We show that these properties jointly characterize a class of rules that we refer to as \emph{target rules with a default}.

A target rule with a default specifies a target level and a default level in $A$. At preference profiles with at least one single-peaked agent, the rule selects the target whenever it is efficient and otherwise selects the peak closest to the target. When all agents have the completely indifferent preference, the rule selects the default level.

We now provide a formal definition of target rules with a default.

\begin{defn}\rm
A social choice function $f : \mathcal{S}_0^n \to A$ is called a \emph{target rule with a default} if there exist a target level $x \in A$ and a default level $y \in A$ such that, for every preference profile $R \in \mathcal{S}_0^n$,
\[
f(R) =
\begin{cases}
x, & \text{if } R \neq R^{0N} \text{ and } x \in [\underline{\tau}(R), \overline{\tau}(R)], \\[0.3em]
\underline{\tau}(R), & \text{if } R \neq R^{0N} \text{ and } x < \underline{\tau}(R), \\[0.3em]
\overline{\tau}(R), & \text{if } R \neq R^{0N} \text{ and } \overline{\tau}(R) < x, \\[0.3em]
y, & \text{if } R = R^{0N}.
\end{cases}
\]
\end{defn}

We are now ready to state the main theorem.

\begin{theorem}\label{thm}\rm
Let $n\geq 3$. A social choice function $f : \mathcal{S}_0^n \to A$ is onto and pairwise strategy-proof if and only if it is a target rule with a default.
\end{theorem}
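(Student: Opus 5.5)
The proof has two directions. Sufficiency is a direct check. Necessity proceeds in four steps: unanimity, efficiency, reduction to phantom-median rules, and identification of a common target. I expect unanimity to be the hardest.

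\textbf{Sufficiency.} Let $f$ be a target rule with target $x$ and default $y$. For onto-ness, take a profile with exactly one single-peaked agent, with peak $a$. Then $\underline{\tau}(R)=\overline{\tau}(R)=a$, so $f(R)=a$ for every $a\in A$. For pairwise strategy-proofness, fix a coalition $S$ with $|S|\le 2$.
\begin{enumerate}
\item Members of $S$ reporting $R^0$ are indifferent to every outcome, so a profitable deviation needs a strictly gaining single-peaked member $j\in S$.
\item If the deviation leaves some single-peaked agent outside $S$, the outcome remains a target-rule outcome on the new set of peaks. I will follow the classical argument that target rules on $\mathcal S$ are group strategy-proof. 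Here $f(R)$ is either $x$ or the peak closest to $x$. Moving the outcome toward $j$'s peak requires moving it away from $x$. That in turn requires some reported peak on the far side of $x$ relative to the current outcome, which hurts a coalition member with a true peak on that side, or is impossible.
\item The deviation can reach $R^{0N}$ only if $N\setminus S$ already reports $R^0$. Then $f(R)$ is efficient for the single-peaked members of $S$: it lies between their peaks, or equals one of them. Hence the default $y$ cannot Pareto-improve on it for them.
\end{enumerate}

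\textbf{Necessity.} Let $f$ be onto and pairwise strategy-proof. Clearly $f$ is strategy-proof. The key device is a ``puppet'' pair. If agent $i$ reports $R^0$ and agent $j$ is single-peaked, the coalition $\{i,j\}$ may change $i$'s report at no cost to $i$. So pairwise strategy-proofness implies that $f(R)\,R_j\,f(R'_i,R_{-i})$ for all $R'_i\in\mathcal S_0$. In words, $j$ weakly prefers the status quo to anything an indifferent agent could induce by changing her report.

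\textbf{Step 1 (unanimity).} I will show that if every single-peaked agent at $R\neq R^{0N}$ has peak $a$, then $f(R)=a$. Onto-ness gives a profile $\hat R$ with $f(\hat R)=a$. Starting from $\hat R$, I replace agents one at a time by the single-peaked preference with peak $a$. Strategy-proofness keeps the outcome at $a$, since an agent with peak $a$ obtaining $a$ cannot do better. Then I move agents to $R^0$ or keep them, as required by $R$. Each move is controlled by the puppet inequality, using a remaining single-peaked agent with peak $a$, who would strictly object to any change away from $a$. This needs at least one single-peaked agent to act as a witness while others switch. The hypothesis $n\ge 3$ lets me order the switches so that a witness and a switching agent are always distinct. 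If $\hat R=R^{0N}$, then $a=y$ and I still must produce a non-null profile with outcome $a$. I will handle this by showing, via the puppet argument, that the values of $f$ on non-null profiles already cover $A$.

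\textbf{Step 2 (efficiency).} Suppose $f(R)=b<\underline{\tau}(R)$, the case above $\overline{\tau}(R)$ being symmetric. Move each single-peaked agent's peak to $\underline{\tau}(R)$, one at a time. Strategy-proofness forces the outcome to stay at or below $b$ at each stage. At the end all single-peaked agents share the peak $\underline{\tau}(R)$, which contradicts Step 1. Efficiency together with strategy-proofness then gives tops-onlyness, as noted in the introduction.

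\textbf{Step 3 (each set of single-peaked agents).} Fix the set $T\neq\emptyset$ of single-peaked agents. By Steps 1 and 2, $f$ restricted to $T$ is a strategy-proof, efficient, tops-only rule on $\mathcal S^{|T|}$. Moulin's characterization therefore makes it a min--max (generalized median) rule with phantoms, and efficiency puts the extreme phantoms at $0$ and $1$. Pairwise strategy-proofness should then force all interior phantoms to coincide at a single $x_T$. If two phantom thresholds differ, I will exhibit two agents on the same side who jointly shift the outcome toward both of their peaks. With all phantoms equal to $x_T$, the rule on $T$ is exactly the target rule with target $x_T$.

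\textbf{Step 4 (common target).} It remains to show $x_T$ is independent of $T$. For $T\subset T'$, take a profile on $T'$ where every agent in $T'\setminus T$ has peak $x_{T'}$, and let an agent $i\in T'\setminus T$ switch to $R^0$. Pairing $i$ with a single-peaked agent $j$ whose peak lies strictly between $x_T$ and $x_{T'}$, the pair $\{i,j\}$ profits unless $x_T=x_{T'}$. I take this witness-and-switch construction, not the algebra, to be the main obstacle in both Step 1 and Step 4. Finally set $y=f(R^{0N})$, which is unrestricted.
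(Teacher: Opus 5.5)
\textbf{The gap is in Step 3.} The mechanism you describe there cannot work. Fix the set $T$ of single-peaked agents. Under any strategy-proof, efficient min--max rule on $\mathcal S^{|T|}$, no pair of agents in $T$ can ever profitably deviate. The paper recalls that such rules on $\mathcal S$ are group strategy-proof. For $|T|=2$ this is immediate: the only pair is $T$ itself, and $f(R)$ lies between its two peaks. So pairwise strategy-proofness restricted to $\mathcal S^{|T|}$ adds nothing to strategy-proofness. The two agents you intend to exhibit do not exist, for example when the restricted rule is a dictatorship.

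The paper's $f^d$ is exactly this case. On $T=N=\{1,2\}$ it is the dictatorship of agent $1$ (interior phantoms $0$ and $1$), yet it is pairwise strategy-proof. Relatedly, your necessity argument never uses $n\ge 3$ essentially. The use you cite in Step 1 is not needed, since an agent who is single-peaked in $R$ is never switched and always serves as the witness. So as written, your argument would also prove the theorem for $n=2$, where it is false.

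\textbf{What is missing.} The phantoms can only be equalized by moving between different sets $T$, with a third agent acting as the puppet. This is the paper's Claim 3: if $i$ and $j$ have peaks $0$ and $1$, pairing any $k\notin\{i,j\}$ with $i$ and with $j$ shows that $k$'s report cannot move the outcome, including a switch to or from $R^0$. Chaining this gives Claim 4, where $n\ge 3$ ensures the two anchors differ from the switching agent. It shows that every profile with one agent at peak $1$ and all others at peak $0$ yields the same $x$, whoever the agent at $1$ is.

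This is what rules out dictatorships and unequal phantoms. For instance, with $n=3$, suppose agent $1$ dictated whenever agent $3$ reports $R^0$. Then four applications of Claim 3 would force the outcome at peaks $(0,1,1)$ to be both $0$ and $1$. Claims 5--7 then recover the target formula from $x$ by further uses of Claim 3, together with efficiency, tops-onlyness and individual strategy-proofness. No appeal to Moulin's characterization is needed. Your Step 4 compares $x_T$ across nested sets, but it presupposes Step 3, so it cannot fill this hole as it stands.

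\textbf{A smaller repair in Step 2.} The replacement preference with peak $\underline{\tau}(R)$ must rank every alternative below the peak above every alternative above it, as the paper does. With an arbitrary shape, the outcome can jump past the deviating agent's true peak, and your claim that it stays at or below $b$ fails.
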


The proof of Theorem~\ref{thm} appears in the Appendix. We conclude this section with several remarks.

\begin{remark}\rm
Target rules with a default are efficient and satisfy the tops-only property. Importantly, Theorem~\ref{thm} does not assume efficiency or the tops-only property; both follow from ontoness and pairwise strategy-proofness.
Moreover, target rules are anonymous in the sense that the outcome depends only on agents’ preferences and not on their identities.\footnote{Formally, a social choice function is \emph{anonymous} if permuting the identities of agents does not change the selected outcome.} 
\end{remark}

\begin{remark}\rm
Target rules with a default are group strategy-proof, as shown in Lemma~\ref{L2}. Consequently, in Theorem~\ref{thm}, pairwise strategy-proofness can be replaced by group strategy-proofness without affecting the conclusion. 
\end{remark}

\begin{remark}\rm
Theorem~\ref{thm} assumes that $n\ge 3$. When $n=2$, there exist social choice functions that are onto and pairwise strategy-proof but are not target rules with a default. We illustrate this observation with the following example.

Consider the social choice function $f^d:\mathcal{S}_0^2\to A$ defined by
\[
f^d(R)=
\begin{cases}
\tau(R_1), & \text{if } R_1\neq R^{0}, \\[0.3em]
\tau(R_2), & \text{if } R_1=R^{0}\text{ and }R_2\neq R^{0}, \\[0.3em]
0, & \text{if } R=(R^{0},R^{0}).
\end{cases}
\]
It is immediate that $f^d$ is onto and strategy-proof. Hence, in order to establish pairwise strategy-proofness, it suffices to show that the two agents cannot jointly misreport their preferences in such a way that both become weakly better off and at least one becomes strictly better off.

Suppose, for a contradiction, that such a joint deviation is possible. Then there exist profiles $R=(R_1,R_2)$ and $R'=(R'_1,R'_2)$ such that, according to the preference profile $R$, each agent weakly prefers the outcome $f^d(R')$ to the outcome $f^d(R)$, and at least one agent strictly prefers $f^d(R')$ to $f^d(R)$.

Observe that $R\neq (R^0,R^0)$, since at the profile $(R^0,R^0)$ both agents are indifferent among all alternatives and hence no deviation can make any agent strictly better off. Also, it follows that $f^d(R')\neq f^d(R)$.

First suppose that $R_1\neq R^{0}$. Then, by definition,
$f^d(R)=\tau(R_1)$, which is agent $1$'s most preferred alternative. Since $f^d(R')\neq f^d(R)$, it follows that
$f^d(R)\,P_1\,f^d(R')$, a contradiction.

Next suppose that $R_1=R^{0}$. Then necessarily $R_2\neq R^{0}$ and $f^d(R)=\tau(R_2)$. Since $\tau(R_2)$ is agent $2$'s most preferred alternative and $f^d(R')\neq f^d(R)$, we obtain
$f^d(R)\,P_2\,f^d(R')$, a contradiction. Thus, a profitable joint deviation by the two agents is not possible, and hence $f^d$ is pairwise strategy-proof.

Finally, we show that $f^d$ is not a target rule with a default. Consider profiles $(\bar R_1,\bar R_2)$ and $(R_1^*,R_2^*)$ such that
$\bar R_1,\bar R_2\neq R^{0}$ and $R_1^*,R_2^*\neq R^{0}$, and
\[
\tau(\bar R_1)=\tau(R_2^*)=0
\quad \text{and} \quad
\tau(\bar R_2)=\tau(R_1^*)=1.
\]
Under any target rule with a default, the outcomes at these two profiles must coincide. However,
\[
f^d(\bar R_1,\bar R_2)=0 \neq 1=f^d(R_1^*,R_2^*).
\]
Therefore, $f^d$ is not a target rule with a default.
\end{remark}

\begin{remark}\rm
If the requirement of pairwise strategy-proofness is weakened further - for instance, to strategy-proofness or to weaker variants of group strategy-proofness, then the conclusions of Theorem~\ref{thm} fail to hold. In particular, there exist onto rules that fail to be efficient, tops-only, or anonymous. We analyze these failures in detail in the next section.
\end{remark}

\section{Failure of Efficiency and Top-Onlyness under Weaker Incentive Constraints}\label{WIC}

In this section, we examine how the conclusions of Theorem~\ref{thm} change when the pairwise strategy-proofness requirement is weakened. In particular, we consider individual strategy-proofness and weaker forms of group strategy-proofness, and show that these incentive constraints no longer ensure efficiency or the tops-only property.

We first introduce a weak form of group strategy-proofness that rules out only deviations making every member of a coalition strictly better off.

\begin{defn}\rm 
A social choice function $f$ is \emph{weakly group strategy-proof} (WGSP) if for every coalition $S \subseteq N$ and every preference profile $R \in \mathcal{S}_0^n$, there exists no $R'_S\in \mathcal{S}_0^{|S|}$ such that
\[
f(R'_S,R_{-S}) \, P_i \, f(R_S,R_{-S}) \quad \text{for all } i \in S.
\]
\end{defn}

A further weakening restricts attention to coalitions of size at most two.

\begin{defn}\rm 
A social choice function $f$ is \emph{weakly pairwise strategy-proof} if the weak group strategy-proofness condition holds for all coalitions $S \subseteq N$ with $|S| \le 2$.
\end{defn}

The following example shows that an onto and weakly group strategy-proof rule need not be efficient or tops-only.

\begin{example}\label{ex1}\rm
Consider the social choice function $f^*:\mathcal{S}_0^n \to A$ defined as follows. For every $R \in \mathcal{S}_0^n$,
\[
f^*(R)=
\begin{cases}
\tau(R_1), & \text{if } R_1 \neq R^{0}, \\[0.3em]
0, & \text{if } R_1 = R^{0} \text{ and } 0 \, P_2 \, 1, \\[0.3em]
1, & \text{otherwise.}
\end{cases}
\]
It is straightforward to verify that $f^*$ is onto. We now show that $f^*$ is weakly group strategy-proof. Suppose, to the contrary, that $f^*$ is not weakly group strategy-proof. Then there exist a coalition $S \subseteq N$, a profile $R \in \mathcal{S}_0^n$, and $R'_S \in \mathcal{S}_0^{|S|}$ such that
\[
f^*(R'_S,R_{-S}) \, P_i \, f^*(R) \quad \text{for all } i \in S.
\]
This implies that, $f^*(R'_S,R_{-S}) \neq f^*(R)$ and $R_i \neq R^0$ for all $i \in S$. Observe first that $S$ cannot be a subset of $N \setminus \{1,2\}$, since in that case the definition of $f^*$ implies $f^*(R'_S,R_{-S}) = f^*(R)$, a contradiction. Hence, either $1 \in S$, or $1 \notin S$ and $2 \in S$.

If $1 \in S$, then by definition of $f^*$, $f^*(R) \, P_1 \, f^*(R'_S,R_{-S})$, which contradicts the assumption.

Next, consider the case in which $1\notin S$ and $2\in S$. Then either
\[
f^*(R)=0 \ \text{and}\ f^*(R'_S,R_{-S})=1,
\]
or
\[
f^*(R)=1 \ \text{and}\ f^*(R'_S,R_{-S})=0.
\]
In either case, we have $f^*(R)\,R_2\,f^*(R'_S,R_{-S})$, which yields a contradiction. Therefore, $f$ is weakly group strategy-proof.

However, $f^*$ fails to be efficient and does not satisfy the tops-only property. 
To see the failure of efficiency, consider the profile $R \in \mathcal{S}_0^n$, where $R_i=R^{0}$ for all $i\neq 2$, $R_2\neq R^{0}$, $\tau(R_{2})=0.5$, and $1 \, P_{2} \, 0$. 
At $R$, agent~2 strictly prefers $0.5$ to $1$, while every agent $i\in N\setminus\{2\}$ is indifferent between $0.5$ and $1$. Hence, $1$ is not an efficient outcome at $R$, yet $f^*(R)=1$. 
To see that $f^*$ violates the tops-only property, consider another profile $R'\in \mathcal{S}_0^n$ such that $R'_i=R^{0}$ for all $i\neq 2$, $R'_2\neq R^{0}$, $\tau(R'_{2})=0.5$, and $0 \, P'_2 \, 1$. Note that
\[
\{\, i \in N \mid R_i = R^{0} \,\} = \{\, i \in N \mid R'_i = R^{0} \,\}=N\setminus\{2\}
\quad \text{and} \quad
\tau(R_2)=\tau(R'_2),
\]
yet $f^*(R)=1 \neq 0 = f^*(R')$. Therefore, $f^*$ does not satisfy the tops-only property.
\hfill$\blacksquare$
\end{example}

Efficiency substantially strengthens the implications of incentive compatibility. The following proposition shows that any efficient and strategy-proof rule is necessarily tops-only and weakly group strategy-proof.

\begin{prop}\rm \label{prop}
Let $f:\mathcal{S}_0^n \rightarrow A$ be an efficient and strategy-proof social choice function. Then:
\begin{enumerate}
\item[(a)] $f$ is tops-only.
\item[(b)] $f$ is weakly group strategy-proof.
\end{enumerate}
\end{prop}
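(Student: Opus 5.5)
The plan is to prove (a) by the standard ``one agent at a time'' argument on the option set, and then prove (b) by combining tops-onlyness and efficiency with a chain of unilateral deviations.

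\textbf{Part (a).} It suffices to show that if agent $i$ replaces $R_i$ by $R_i'$, where both are single-peaked with common peak $t$, then $f(R_i,R_{-i})=f(R_i',R_{-i})$. Iterating over agents then gives tops-onlyness, since condition 1 of the definition is unaffected. Fix $R_{-i}$ and let $O=\{f(Q,R_{-i}) : Q\in\mathcal S_0\}$ be $i$'s option set. By strategy-proofness, $f(Q,R_{-i})$ is a $Q$-maximal element of $O$. Two immediate consequences follow:
\begin{enumerate}
\item[(i)] if $s\in O$, then every single-peaked report with peak $s$ yields $s$;
\item[(ii)] if every other agent is $R^0$, efficiency forces the outcome to equal $t$, and we are done.
\end{enumerate}
Otherwise, let $m$ and $M$ be the smallest and largest peaks among the other single-peaked agents. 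Efficiency places every outcome from a peak-$t$ report in $[\min(m,t),\max(M,t)]$.

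Now suppose $a=f(R_i,R_{-i})\neq b=f(R_i',R_{-i})$. Then $a\,R_i\,b$ and $b\,R_i'\,a$. If $a$ and $b$ lie on the same side of $t$, strict monotonicity of single-peaked preferences on each side gives a contradiction. So we may take $a<t<b$. Maximality then yields $O\cap(a,b)=\emptyset$, because any point of $(a,t]$ beats $a$ for $R_i$ and any point of $[t,b)$ beats $b$ for $R_i'$. The efficiency bounds force $m<t<M$; otherwise one of $a,b$ would be infeasible.

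The remaining case, where agent $i$ faces a ``gap'' $(a,b)$ around $t$ in her option set, is the main obstacle. I would attack it through a second agent $j$ whose peak lies on the far side. Say $M\ge b$ is the peak of some agent $j$. Move $j$'s preference, keeping her peak, in two ways: to one that strictly prefers $a$ to $b$, and to one that strictly prefers $b$ to $a$. Then use strategy-proofness of $j$ together with (i) to show that the gap cannot persist. If this fails, the fallback is to compare the peak-$t$ reports with reports at peaks $s\in(a,b)$. By efficiency and the gap, such a report yields either $a$ or $b$. A switching point $s^*$ then exists, and a preference with peak near $s^*$ produces a profitable unilateral misreport.

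\textbf{Part (b).} Suppose some coalition $S$ and report $R_S'$ make every member of $S$ strictly better off, moving the outcome from $a=f(R)$ to $b=f(R_S',R_{-S})$. Then every $i\in S$ is single-peaked. Without loss of generality $a<b$, and strict improvement of every member of $S$ forces $\tau(R_i)>a$ for all $i\in S$.

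First, efficiency at $R$ places $a\in[\underline\tau(R),\overline\tau(R)]$. Next, by part (a) we may replace each $R_i'$ by any preference with the same peak. Then I change the members of $S$ one at a time from $R_i$ to $R_i'$, producing outcomes $a=x_0,x_1,\dots,x_k=b$. I would show by induction that the outcome is monotone along the chain. At the step where the outcome first moves to the right of $a$, the moving agent $i$ has true peak above $a$, and strategy-proofness of $i$ (both at the true and at the reported preference) bounds the new outcome. Using single-peakedness, this forces the outcome never to pass a point that some member of $S$ weakly prefers less than $a$, contradicting $b\,P_i\,a$ for all $i\in S$. Throughout, tops-onlyness is used to choose the intermediate preferences conveniently, for instance extreme on the side opposite the peak, so that each step's strategy-proofness inequality translates into an inequality on positions.
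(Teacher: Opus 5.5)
\textbf{Part (a) has a genuine gap, and it sits exactly where the difficulty of the tops-only theorem lies.} Your preliminary steps are correct: (i), (ii), the same-side argument, $O\cap(a,b)=\emptyset$, and $m\le a<t<b\le M$. Neither of your attacks on the remaining case works.

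Your first route asks for a preference of agent $j$, with peak $M\ge b$, that strictly prefers $a$ to $b$. Since $a<b\le M$, single-peakedness forces $b\,P_j\,a$ for every such preference, so this move does not exist.

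The fallback fails for a structural reason. With $R_{-i}$ fixed, strategy-proofness of agent $i$ says precisely that $f(Q,R_{-i})$ is a $Q$-maximal element of $O$. The rule ``report a peak in $(a,b)$ and receive whichever of $a,b$ your report ranks higher'' satisfies this. For each peak $s\in(a,b)$ there are reports preferring $a$ and reports preferring $b$, so there is no switching point in the peak. Nor can any unilateral misreport by $i$ at $R_{-i}$ be profitable. A contradiction must come from varying $R_{-i}$, that is, from the other agents' incentives combined with efficiency at other profiles, and you give no such argument.

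The paper avoids this by a reduction. Fix the set $S$ of indifferent agents and define $h^S(R)=f(R^{0S},R)$ on $\mathcal{S}^{|N\setminus S|}$. Efficiency and strategy-proofness pass to $h^S$, because efficiency at $(R^{0S},R)$ is just the peak-interval condition. Lemma~\ref{L1} (Weymark's Proposition~1) then makes $h^S$ tops-only, which gives (a). You need either this reduction or a genuine multi-agent proof that option sets have no such gaps.

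\textbf{Part (b) follows the paper's route but replaces the wrong preferences.} The route is right: use (a) to swap in convenient preferences, then run a chain of unilateral deviations. Your observation that $a<b$ forces $\tau(R_i)>a$ for all $i\in S$ neatly absorbs the paper's Case~1.

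What must be replaced are the members' \emph{true} preferences, not the reports $R'_i$, which may equal $R^0$ and have no peak. Take $R^*_i$ with peak $\tau(R_i)$ and $z\,P^*_i\,w$ whenever $w<\tau(R_i)<z$; then $f(R^*_S,R_{-S})=a$ by tops-onlyness. When the $l$-th member $i$ switches from $R^*_i$ to $R'_i$, the current outcome satisfies $x_{l-1}\le a<\tau(R_i)$. Every point to the right of $x_{l-1}$ is strictly $R^*_i$-better, so strategy-proofness of the mover under $R^*_i$ gives $x_l\le x_{l-1}$. Hence $b\le a$, a contradiction.

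With the original $R_i$, your ``first step to the right of $a$'' argument yields only $a\,R_i\,x_l$ for that one mover. Nothing in your argument stops later movers from pushing the outcome back to a point every member prefers to $a$, so monotonicity is not established. Strategy-proofness ``at the reported preference'' gives no usable bound in general. Efficiency at $R$ is not needed in (b).
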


The proof of Proposition~\ref{prop} is provided in the Appendix. We conclude this section with the following remarks on Proposition~\ref{prop} and Example~\ref{ex1}.

\begin{remark}\rm
Example~\ref{ex1} illustrates the difficulty of characterizing onto and strategy-proof social choice functions on the domain $\mathcal{S}_0$. In particular, there exist onto and weakly group strategy-proof rules that are neither efficient nor tops-only. This stands in sharp contrast to the single-peaked domain $\mathcal{S}$, where for strategy-proof rules, ontoness and efficiency are equivalent. Moreover, on $\mathcal{S}$, onto and strategy-proof rules are tops-only and even group strategy-proof. The analysis of strategy-proof rules on the domain of single-peaked preferences was initiated by the seminal contribution of \cite{moulin1980strategy}. Subsequent work established the tops-only property for this domain under various formulations and assumptions; see, for example, \cite{barbera1993generalized}, \cite{barbera1994characterization}, \cite{ching1997strategy}, \cite{weymark2008strategy}, \cite{chatterji2011tops} and \cite{weymark2011unified}.
\end{remark}

\begin{remark}\rm
Although weak group strategy-proofness is formally stronger than individual strategy-proofness, Proposition~\ref{prop} implies that, on our domain, the class of efficient and strategy-proof social choice functions coincides with the class of efficient and weakly group strategy-proof social choice functions. Moreover, all such rules satisfy the tops-only property. We do not pursue a characterization of this class in the present paper, as doing so would require introducing additional notation and structure that would add little insight relative to our main objective. It is nevertheless worth noting that \citet{berga1998} provides a characterization of tops-only and strategy-proof rules on the single-plateaued domain. By focusing on those rules in Berga’s characterization that are efficient, then restricting these rules to our domain - namely, to preference profiles in which each agent’s plateau is either a singleton or the entire interval $[0,1]$ - yields a class of social choice functions that are efficient and strategy-proof on our domain. This class may in fact coincide with the full class of efficient and strategy-proof rules on our domain, but we do not investigate this issue further in the present paper. Our primary focus instead is on the characterization of target rules with a default.
\end{remark}

\section{Conclusion}\label{Conclusion}
We study a public decision problem in which agents' preferences are either single-peaked or completely indifferent over a closed interval of public-good levels. On this augmented domain, we characterize the class of onto and pairwise strategy-proof social choice functions, showing that they coincide with the target rules with a default. Thus, we provide an incentive-based foundation for this class that contrasts with solidarity-based characterizations on the classical single-peaked domain.

\bibliographystyle{ecta}
\bibliography{pairwise}
\appendix

\section{Appendix} \label{app1}
\subsection{Proof of Proposition \ref{prop}}
We begin by recalling a established result on strategy-proof social choice functions on the single-peaked domain $\mathcal{S}$. The implication that efficiency and strategy-proofness yield the tops-only property is known under several alternative formulations of single-peaked preferences. For our purposes, we refer to \cite{weymark2011unified}.

Let $h:\mathcal{S}^n \to A$ be a social choice function defined on $\mathcal{S}$. A social choice function $h$ is \emph{strategy-proof} if for every agent $i \in N$, every preference profile $R \in \mathcal{S}^n$, and every $R'_i \in \mathcal{S}$,
$h(R_i,R_{-i}) \; R_i \; h(R'_i,R_{-i})$.

An alternative $a \in A$ is \emph{efficient} at a preference profile $R \in \mathcal{S}^n$, if there exists no alternative $b \in A$, $b \neq a$, such that
\[
b R_i a \quad \text{for all } i \in N,
\]
and
\[
b P_j a \quad \text{for some } j \in N.
\]
A social choice function $h$ is \emph{efficient} if $h(R)$ is efficient at every $R \in \mathcal{S}^n$.

If we consider the following weaker notion of efficiency, it coincides with efficiency on the domain $\mathcal{S}$. An alternative $a \in A$ is \emph{efficient$^{*}$} at a preference profile $R \in \mathcal{S}^n$ if there exists no alternative $b \in A$, $b \neq a$, such that $b P_i a$ for all $i \in N$. However, this equivalence need not hold on our domain $\mathcal{S}_0$.

 Finally, $h$ satisfies the \emph{tops-only property} if for any two preference profiles $R, R' \in \mathcal{S}^n$ such that $\tau(R_i) = \tau(R'_i)$ for every $i \in N$, it holds that $h(R) = h(R')$.

We are now ready to state the result that we will use in the proof of Proposition~\ref{prop}.

\begin{lemma}\rm \label{L1}
Let $h:\mathcal{S}^n \rightarrow A$ be an efficient and strategy-proof social choice function. Then $h$ satisfies the tops-only property.
\end{lemma}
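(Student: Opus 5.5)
The plan is to reduce tops-onlyness to one agent at a time and then study that agent's option set. Tops-onlyness changes all agents' preferences at once, but if $h$ is unchanged whenever a single agent replaces her preference by another with the same peak, then moving from $R$ to $R'$ one agent at a time gives $h(R)=h(R')$. So I fix $i\in N$, $R_{-i}\in\mathcal S^{n-1}$, and $R_i,R'_i\in\mathcal S$ with common peak $p$. Let $a=h(R_i,R_{-i})$ and $b=h(R'_i,R_{-i})$, and suppose $a\neq b$. Strategy-proofness gives $a\,R_i\,b$ and $b\,R'_i\,a$.

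\emph{Step 1 (same side).} If $a$ and $b$ lie weakly on the same side of $p$, single-peakedness orders them identically under every preference with peak $p$. The one closer to $p$ is then strictly preferred under both $R_i$ and $R'_i$, contradicting one of the two inequalities. So, relabelling if needed, $a<p<b$.

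\emph{Step 2 (option set has a gap around $p$).} Let $O_i=\{h(\tilde R_i,R_{-i}):\tilde R_i\in\mathcal S\}$. By strategy-proofness, $a$ is $R_i$-maximal and $b$ is $R'_i$-maximal in $O_i$. Suppose some $c\in O_i\cap(a,b)$.
\begin{itemize}
\item If $c\le p$, then $c\,P_i\,a$.
\item If $c\ge p$, then $c\,P'_i\,b$.
\end{itemize}
Either case contradicts maximality, so $O_i\cap(a,b)=\emptyset$, and in particular $p\notin O_i$.

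\emph{Step 3 (efficiency locates the other peaks).} On $\mathcal S$ the efficient set is $[\underline\tau,\overline\tau]$. If every other agent's peak were $\ge p$, efficiency at $(R_i,R_{-i})$ would force $a\ge p$, which is impossible. Symmetrically, if every other peak were $\le p$, then $b\le p$. Hence some $j\neq i$ has peak $q\le a$, since $q<p$ and the gap argument keeps $a$ as the left endpoint. Likewise some $k\neq i$ has peak $q'\ge b$.

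\emph{Step 4 (main obstacle: closing the gap).} I must show that $p$ cannot be skipped when other agents' peaks lie on both sides of it. My first attempt is a second option-set argument applied to agent $j$, whose peak is at or below $a$. Fix $i$'s report $R_i$ and vary $j$'s report: efficiency together with Step 1, applied to $j$, restricts where the outcome can move. I would move $j$'s peak rightward toward $p$ and track the outcome using strategy-proofness for $j$ (uncompromisingness-type arguments). Simultaneously, I would make $i$'s preference a one-sided steep variant, using that Step 1 already holds for all peaks. The aim is a profile at which efficiency forces the outcome into $(a,b)$, while the option-set reasoning for $i$ still forbids it.

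If this does not close directly, I would fall back on an induction on the number of agents whose peaks lie on the same side of $p$. Each induction step would use Step 2 for the deviating agent and efficiency to pin the endpoints.

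This gap-closing step is where I expect the real work. It is also where efficiency is indispensable, since strategy-proofness alone allows tie-breaking between $a$ and $b$ that depends on non-peak features of $R_i$. If the argument becomes unwieldy, I would simply invoke the known result, \cite{weymark2011unified}, which is how the paper states it.
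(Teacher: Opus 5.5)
Your Steps 1--3 are sound, and for $n=2$ they already finish the proof: the unique other agent would need a peak that is both $\le a$ and $\ge b$. For $n\ge 3$, however, the whole content of the lemma sits in Step 4, which you do not carry out. The configuration you reach ($a<p<b$, $O_i\cap(a,b)=\emptyset$, other peaks on both sides of $[a,b]$) is exactly what must be excluded. Efficiency at $(R_i,R_{-i})$ says nothing against it, because $[\underline{\tau},\overline{\tau}]\supseteq[a,b]$. Your first tactic also runs into a concrete obstruction. The gap is a property of $O_i(R_{-i})$ for this particular $R_{-i}$, so once you change $j$'s (or $k$'s) report, $i$'s option set changes and nothing you have shown preserves the gap. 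Meanwhile, efficiency only forces the outcome into $(a,b)$ after \emph{all} peaks, including those of $j$ and $k$, have been moved into $(a,b)$. The fallback induction is left unspecified. As written, then, the proposal proves the lemma only for $n\le 2$ and otherwise rests on citing \cite{weymark2011unified}. That citation is, admittedly, all the paper itself gives.

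The missing idea is an induction on $n$ that merges two agents. The case $n=1$ is immediate from efficiency. For $n\ge 2$, fix $j\neq i$ and let $g(Q,R_{-ij})$ be the value of $h$ when both $i$ and $j$ report $Q$ and the others report $R_{-ij}$. This rule on $\mathcal{S}^{n-1}$ is efficient, since the set of peaks is unchanged. It is also strategy-proof (for agents other than the merged one this is inherited). Suppose $y=g(Q',R_{-ij})$ is strictly $Q$-preferred to $x=g(Q,R_{-ij})$, and let $z$ be the outcome when $i$ reports $Q'$ and $j$ reports $Q$. Strategy-proofness of $h$ for $i$ gives $x\,Q\,z$, and for $j$ gives $z\,Q\,y$, a contradiction. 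So $g$ is tops-only by the induction hypothesis. Now let $L$ (resp.\ $U$) be the preference with peak $p$ that ranks every point left (resp.\ right) of $p$ above every point on the other side. By your Step 2, $a=\max(O_i\cap[0,p])$ and $b=\min(O_i\cap[p,1])$, so $h(L,R_{-i})=a$ and $h(U,R_{-i})=b$. Let $w$ be the common value of $g(L,R_{-ij})$ and $g(U,R_{-ij})$. Apply strategy-proofness of $h$ to agent $j$, switching between $R_j$ and $L$ (resp.\ $U$) while $i$ reports $L$ (resp.\ $U$). This yields $w\,L\,a$, $w\,U\,b$, $a\,R_j\,w$ and $b\,R_j\,w$. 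The first two relations force $w\in[a,p]\cap[p,b]=\{p\}$. Then $a\,R_j\,p$ and $b\,R_j\,p$ with $a<p<b$ contradict single-peakedness of $R_j$. This closes the gap without needing Step 3 or any path argument.
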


\begin{proof}
See Proposition~1 in \cite{weymark2011unified} for a detailed proof.
\end{proof}

We now proceed to the proof of Proposition~\ref{prop}.
\bigskip

\begin{proof}[Proof of Proposition~\ref{prop}] Let $f:\mathcal{S}_0^n\rightarrow A$ be an efficient and strategy-proof scf.

\begin{enumerate}
\item[(a)]  We will show that $f$ is tops-only. Consider any two distinct profiles $R',R''\in \mathcal{S}_0^n$ such that $\{\, i \in N \mid R'_i = R^{0} \,\} = \{\, i \in N \mid R''_i = R^{0} \,\}$, and $\tau(R'_i) = \tau(R''_i)$ for every $i \in N$ with $R'_i \neq R^{0}$. We will show that $f(R')=f(R'')$.
 
 Let $\{\, i \in N \mid R_i = R^{0} \,\} = \{\, i \in N \mid R'_i = R^{0} \,\}= S$. Note that $S\subset N$, and $S$ may be empty. We consider an social choice function for agents in $N\setminus S$ on single-peaked domain $\mathcal{S}$ as follows. For any $R\in \mathcal{S}^{|N\setminus S|}$,
 $$h^S(R)=f(R^{0S},R).$$
 Since $f$ is strategy-proof and efficient, it follows that $h^S$ is also strategy-proof and efficient. Hence, by Lemma \ref{L1}, it follows that $h^S$ is tops-only. Now consider two distinct profiles $\bar{R}, \hat{R}\in \mathcal{S}^{|N\setminus S|}$ such that $\bar{R_i}=R'_i$ and  $\hat{R_i}=R''_i$ for all $i\in N\setminus S$. Note that $\tau(\bar{R_i})=\tau(\hat{R_i})$ for all $i\in N\setminus S$. Hence by tops-onlyness property of $h^S$, we have that $h^S(\bar{R})=h^S(\hat{R})$. Therefore, $f(R')=f(R^{0S},\bar{R})=f(R^{0S},\hat{R})=f(R'')$.
 
 \item[(b)] We will show that $f$ is WGSP. Suppose not. There exists $S\subseteq N$, $R\in \mathcal{S}_0^n$ and $R'_S\in \mathcal{S}_0^{|S|}$ such that $f(R_S',R_{S})\;P_i\;f(R)$ for all $i\in S$. That means, $f(R_S',R_{-S})\neq f(R)$ and  $R_i\ne R^0$ for any $i\in S$. Let $f(R)=x\neq y=f(R_S',R_{-S})$.  We will consider following three cases.
 
 Case $1$: $x \in [\min \limits_{i\in S}\{\tau(R_i)\},\max \limits_{i\in S}\{\tau(R_i)\}]$. Suppose $x<y$. Let for agent $k\in S$, $\tau(R_k)=\min \limits_{i\in S}\{\tau(R_i)\}$. Then, $f(R)\;P_k\;f(R_S',R_{S})$ - contradicting our assumption. Suppose $y<x$. Let for agent $l\in S$, $\tau(R_l)=\max \limits_{i\in S}\{\tau(R_i)\}$. Then, $f(R)\;P_l\;f(R_S',R_{S})$ - contradicting our assumption.
 
 Case $2$: $x < \min \limits_{i\in S}\{\tau(R_i)\}$. For all $i\in S$, let $R_i^*$ be such that $R_i^*\neq R^0$, $\tau(R_i^*)=\tau(R_i)$ and for all $w< \tau(R_i)$ and $z> \tau(R_i)$, $z\; P_i^* \;w$. Since $f$ is tops-only by Proposition \ref{prop}(a), $f(R^*_S,R_{-S})=f(R)=x$. W.l.o.g., we assume that $S=\{1,\ldots,s\}$. Now applying strategy-proofness, we have that
 
 \begin{align*}
x & =f(R)\\
&= f(R^*_S,R_{-S})\\ 
& \geq f(R'_1,R^*_2,\ldots, R^*_s,R_{-S})\\ 
& \vdots\\
& \geq f(R'_1,R'_2,\ldots, R'_s,R_{-S})\\ 
&= f(R_S',R_{S})\\
&= y
\end{align*}

Since \(x \neq y\), we have $f(R_S', R_{S}) < f(R)$.
If \(x = 0\), this contradicts the fact that $f(R_S', R_{S}) \in [0,1]$. If \(x \neq 0\), then for any \(i \in S\),
$f(R) \; P_i \; f(R_S', R_{S})$, which contradicts our assumption. 
 
 Case $3$: $x > \max \limits_{i\in S}\{\tau(R_i)\}$. For all $i\in S$, let $R_i^*$ be such that $R_i^*\neq R^0$, $\tau(R_i^*)=\tau(R_i)$ and for all $w< \tau(R_i)$ and $z> \tau(R_i)$, $w\; P_i^* \;z$. Since $f$ is tops-only by Proposition \ref{prop}(a), $f(R^*_S,R_{-S})=f(R)=x$. W.l.o.g., we assume that $S=\{1,\ldots,s\}$. Now applying strategy-proofness, we have that
 
 \begin{align*}
x & =f(R)\\
&= f(R^*_S,R_{-S})\\ 
& \leq f(R'_1,R^*_2,\ldots, R^*_s,R_{-S})\\ 
& \vdots\\
& \leq f(R'_1,R'_2,\ldots, R'_s,R_{-S})\\ 
&= f(R_S',R_{S})\\
&= y
\end{align*}

Since \(x \neq y\), we have $f(R)< f(R_S', R_{S})$.
If \(x = 1\), this contradicts the fact that $f(R_S', R_{S}) \in [0,1]$. If \(x \neq 1\), then for any \(i \in S\),
$f(R) \; P_i \; f(R_S', R_{S})$, which contradicts our assumption. 
\end{enumerate}
This completes the proof.
\end{proof}

\subsection{Proof of Theorem \ref{thm}}

\begin{proof} (\textbf{If part})
Let \(f: \mathcal{S}_0^n \longrightarrow A\) be any pairwise strategy-proof and onto SCF. The following claim shows that \(f\) is efficient.

\begin{claim}\rm \label{Claim 1}
\(f\) is efficient.
\end{claim}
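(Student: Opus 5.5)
The plan is to show that at every profile $R\neq R^{0N}$ the outcome $f(R)$ lies in $[\underline{\tau}(R),\overline{\tau}(R)]$. At $R^{0N}$ every alternative is efficient, so nothing needs to be shown there. The argument has two steps: a unanimity step obtained from onto-ness, and a contradiction argument that uses pairs in which an indifferent (or switched) agent acts as the accomplice of a single-peaked agent.

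\textbf{Step 1 (unanimity, including indifferent agents).} Fix $a\in A$. By onto-ness there is a profile $R'$ with $f(R')=a$. Replace the agents one at a time by preferences in $\mathcal S$ with peak $a$. Each replacement keeps the outcome equal to $a$. Indeed, if the outcome became $b\neq a$, the agent whose true preference now has peak $a$ could report her old preference $R'_i$ and obtain $a\,P_i\,b$, which contradicts strategy-proofness.

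This shows $f(R)=a$ whenever every agent is single-peaked with peak $a$; the same chain works for any choice of peak-$a$ preferences. Next, turn agents into $R^0$ one at a time, always keeping at least one single-peaked agent $k$ with peak $a$. If switching agent $j$ to $R^0$ moved the outcome to $b\neq a$, the pair $\{j,k\}$ could jointly let $j$ report her former peak-$a$ preference. This restores $a$, leaves $j$ indifferent and makes $k$ strictly better off, contradicting pairwise strategy-proofness. Hence $f(R)=a$ whenever all non-indifferent agents share the peak $a$.

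\textbf{Step 2 (efficiency).} Suppose, for contradiction, that $f(R)=b<c:=\underline{\tau}(R)$; the case $b>\overline{\tau}(R)$ is symmetric. The idea is to move the profile step by step to one where all single-peaked agents have peak $c$, while keeping the outcome weakly to the left of $c$. By Step 1 the outcome at the final profile must then be $c$, and tracing back along the path yields a profitable unilateral or pairwise deviation.

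Each move replaces one agent $j$ with $\tau(R_j)>c$ by a preference $\hat R_j$ with peak $c$. Let $b'$ denote the new outcome. Strategy-proofness in both directions constrains $b'$. At the original profile, $j$ must not strictly prefer $b'$ to $b$. At the new profile, $j$ must weakly prefer $b'$ to $b$. Together these place $b'$ in $j$'s indifference set through $b$, intersected with where $\hat R_j$ ranks $b'$ weakly above $b$.

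Here I would exploit the freedom in choosing $\hat R_j$. I would pick it so that every point to the right of $c$ is ranked strictly above $b$, exactly as the paper does with $R_i^*$ in the proof of Proposition~\ref{prop}. Then $b'\le c$ is forced unless $b'$ is a point the agent at $c$ would strictly prefer. In that case the pair formed by $j$ and an agent $k$ with peak $c$ can deviate back, contradicting pairwise strategy-proofness. Indifferent agents stay as they are.

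\textbf{Main obstacle.} The hard part is controlling the outcome along this path. A single-peaked agent's weak lower contour set at $b$ may contain points far to the right, so an outcome could in principle jump past $c$. The pairwise deviations, which use an agent already at peak $c$ or an indifferent agent as the partner, are precisely what rule out such jumps. I would first try to obtain the invariant ``outcome $\le c$'' along the path, with the outcome lying strictly left of $c$ at the start. At the endpoint the outcome is $c$, so there is a first step at which the outcome moves into $(b,c]$. At that step the agent $k$ with peak $c$ strictly gains, and $k$ together with the switching agent forms a profitable pair. Establishing this invariant is the step I expect to need the most care.
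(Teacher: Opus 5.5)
Your Step 1 is correct, including its extension to profiles with indifferent agents. In that extension, the pair $\{j,k\}$ in which $j$ returns to her peak-$a$ report is a genuine profitable deviation.

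\textbf{The gap.} The gap is in Step 2. It is exactly the invariant you flag as the main obstacle, and your choice of $\hat R_j$ cannot deliver it. Suppose $\hat R_j$ ranks every point to the right of $c$ strictly above $b$. Then the strategy-proofness constraint at the new profile, $b'\,\hat R_j\,b$, holds for every $b'\ge b$, so it rules out nothing on the right. Together with $b\,R_j\,b'$ you only get either $b'=b$, or $b'>\tau(R_j)$ with $b\,R_j\,b'$. The second option is precisely the jump past $c$ you worry about.

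Your pairwise repair does not close this, for three reasons:
\begin{itemize}
\item Deviating back from the new profile requires $j$, whose true preference is now $\hat R_j$, to weakly prefer $b$ to $b'$. But $\hat R_j$ ranks $b'>c$ strictly above $b$.
\item Deviating forward from the old profile requires $b'\,R_j\,b$. Strategy-proofness only delivers this in the knife-edge case $b'\,I_j\,b$.
\item An agent $k$ with peak $c$ and an unmodified preference need not rank $b'$ above $b$ at all.
\end{itemize}
So the assertion that ``$b'\le c$ is forced unless \ldots'' is unsupported, and the argument does not go through as proposed. The right-biased preferences in the proof of Proposition~\ref{prop} serve a different purpose there. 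They make the upper contour set of $x$ equal to $(x,1]$, so that any deviation can only push the outcome leftward.

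\textbf{The missing idea.} Bias $\hat R_j$ the other way: give it peak $c$, with every point left of $c$ strictly preferred to every point right of $c$. This is what the paper does with $R_i^*$ in Subcase~2.1. Since $b<c$, the constraint $b'\,\hat R_j\,b$ now forces $b'\in[b,c]$. Then $b\,R_j\,b'$, together with $b<c<\tau(R_j)$, forces $b'=b$. Hence the outcome stays exactly $b$ along the whole path, using strategy-proofness alone. Your extended Step 1 then gives outcome $c$ at the endpoint, which is a contradiction. The case $b>\overline{\tau}(R)$ is symmetric, with the opposite bias at $\overline{\tau}(R)$.

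\textbf{Comparison with the paper.} With this correction your route is valid, and it is organized differently from the paper's.
\begin{itemize}
\item The paper first proves efficiency at profiles with no indifferent agents. It then switches the $R^0$-agents in one at a time, starting from reports with peak $\underline{\tau}(R)$. Each switch uses pairwise strategy-proofness with a partner whose peak is $\underline{\tau}(R)$ or $\overline{\tau}(R)$.
\item You instead absorb the indifferent agents into the unanimity step. Pairwise strategy-proofness is then used only there, and the indifferent agents are simply left untouched in Step 2.
\end{itemize}
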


\begin{proof} Consider any profile $R \in \mathcal{S}_0^n$. If $R = R^{0N}$, then the set of efficient alternatives is $[0,1]$. Now let $R \neq R^{0N}$. We show that $f(R) \in [\underline{\tau}(R), \overline{\tau}(R)]$. To complete the proof, we consider the following three cases.

\textit{Case 1:} Let $R \neq R^{0N}$ be such that $R_i \neq R^0$ for all $i \in N$, and $\tau(R_j)=\tau(R_k)$ for all $j,k \in N$. W.o.l.g., assume that $\tau(R_1)=\tau(R_2)=\cdots=\tau(R_n)=a$.
We show that $f(R)=a$. Since $f$ is onto, there exists a profile $R' \in \mathcal{S}_0^n$ such that $f(R')=a$. Note that $f(R_1,R'_{-1})=a$. Otherwise, if $f(R_1,R'_{-1})\neq a$, then $f(R')\;P_1\;f(R_1,R'_{-1})$, contradicting strategy-proofness. 
Iterating this replacement argument over agents $2,3,\ldots,n$ and invoking strategy-proofness at each step yields $f(R)=a$.

\textit{Case 2:} Let $R \neq R^{0N}$ be such that $R_i \neq R^0$ for all $i \in N$. We show that $f(R) \in [\underline{\tau}(R), \overline{\tau}(R)]$. If $\underline{\tau}(R) = \overline{\tau}(R)$, the claim follows from Case~1. Hence, suppose that $\underline{\tau}(R) \neq \overline{\tau}(R)$. We assume for contradiction that $f(R)=x \notin [\underline{\tau}(R), \overline{\tau}(R)]$. We consider the following two subcases.

\textit{Subcase 2.1:} $x < \underline{\tau}(R)$. Consider a profile $R^* \in \mathcal{S}_0^n$ such that, for all $i \in N$, $R_i^*\neq R^0$, $\tau(R_i^*) = \underline{\tau}(R)$, and for all alternatives $w,z \in A$ satisfying $w < \underline{\tau}(R) < z$,  $w \; P_i^* \; z$. Starting from $R$, replace agents' preferences one by one by $R_i^*$. By repeated applications of strategy-proofness, the outcome is unchanged along this sequence; hence $f(R^*)=f(R)=x$. However, by Case~1, we have $f(R^*) = \underline{\tau}(R)$, which contradicts $x < \underline{\tau}(R)$.

\textit{Subcase 2.2:} $x > \overline{\tau}(R)$. Consider a profile $R^{**} \in \mathcal{S}_0^n$ such that, for all $i \in N$, $R_i^{**}\neq R^0$, $\tau(R_i^{**}) = \overline{\tau}(R)$, and for all alternatives $w,z \in A$ satisfying $w < \overline{\tau}(R) < z$, $z \; P_i^{**} \; w$. Starting from $R$, replace agents' preferences one by one by $R_i^{**}$. By repeated applications of strategy-proofness, the outcome is unchanged along this sequence; hence $f(R^{**})=f(R)=x$. However, by Case~1, we have $f(R^{**}) = \overline{\tau}(R)$, which contradicts $x > \overline{\tau}(R)$.

\textit{Case 3:} Let $R \neq R^{0N}$ such that for some $S\subset N$, and for for all $i\in S$, $R_i=R^0$. W.o.l.g., let $S=\{1,\ldots,s\}$ and $N\setminus S=\{s+1,\ldots,n\}$. We show that $f(R) \in [\underline{\tau}(R), \overline{\tau}(R)]$. W.o.l.g., we assume that $\tau(R_{s+1})=\underline{\tau}(R)$ and $\tau(R_{n})=\overline{\tau}(R)$. Consider a profile $R''\in \mathcal{S}_0^n$ such that for all $i\in S$, $R_i''\neq R^0$ and $\tau(R_i'')=\underline{\tau}(R)$ and for all $i\notin S$, $R_i''=R_i$.  From case~2, we have that $f(R'') \in [\underline{\tau}(R), \overline{\tau}(R)]$.  Now at $R''$, we replace agent~1's preference by $R_1$ and claim that  $f(R_1,R_2''\ldots,R_n'')\in [\underline{\tau}(R),\overline{\tau}(R)]$. Suppose not. Then, either $f(R_1,R_2''\ldots,R_n'')<\underline{\tau}(R)$ or $\overline{\tau}(R)<f(R_1,R_2''\ldots,R_n'')$. Suppose $f(R_1,R_2''\ldots,R_n'')<\underline{\tau}(R)$. Consider the coalition $\{1,n\}$, the profile $(R_1,R_2''\ldots,R_n'')$ and the profile for the coalition $\{1,n\}$, $(R''_1,R''_n)$. Note that $$f(R''_1,R''_n,R''_{-\{1,n\}})\;I_1\;f(R_1,R_2''\ldots,R_n'')$$ and  $f(R''_1,R''_n,R''_{-\{1,n\}})\;P''_n\;f(R_1,R_2''\ldots,R_n'')$. This contradicts Pairwise strategy-proofness. Now, suppose $\overline{\tau}(R)<f(R_1,R''_2\ldots,R''_n)$. Now consider the coalition $\{1,s+1\}$, the profile $(R_1,R_2''\ldots,R_n'')$ and the profile for the coalition $\{1,s+1\}$, $(R''_1,R''_{s+1})$. Note that $$f(R''_1,R''_{s+1},R''_{-\{1,s+1\}})\;I_1\;f(R_1,R_2''\ldots,R_n'')$$ and  $f(R''_1,R''_{s+1},R''_{-\{1,s+1\}})\;P''_{s+1}\;f(R_1,R_2''\ldots,R_n'')$. This contradicts Pairwise strategy-proofness. Therefore, $f(R_1,R_2''\ldots,R_n'')\in [\underline{\tau}(R),\overline{\tau}(R)]$. Iterating this replacement argument over agents $2,3,\ldots,s$ and invoking pairwise strategy-proofness at each step yields $$f(R_1,\ldots,R_s,R''_{s+1},\ldots, R''_n)\in [\underline{\tau}(R), \overline{\tau}(R)].$$ Note that $R=(R_1,\ldots,R_s,R''_{s+1},\ldots, R''_n)$. Hence, $f(R)\in [\underline{\tau}(R), \overline{\tau}(R)]$.
\end{proof}

The next claim shows that \(f\) satisfies tops-onlyness.

\begin{claim}\rm \label{Claim 2}
$f$ is tops-only.
\end{claim}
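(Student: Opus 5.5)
Claim~\ref{Claim 2} follows almost directly from results already established. By Claim~\ref{Claim 1}, $f$ is efficient. Since $f$ is pairwise strategy-proof, it is in particular strategy-proof: take coalitions with $|S|=1$, for which the GSP condition is exactly the strategy-proofness condition. Hence $f$ is efficient and strategy-proof on $\mathcal{S}_0^n$, and Proposition~\ref{prop}(a) applies to give that $f$ is tops-only.

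For completeness, I would spell out how the argument of Proposition~\ref{prop}(a) specializes here. Fix two profiles $R', R''$ with the same set $S$ of completely indifferent agents and the same peaks for agents in $N\setminus S$.

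\begin{enumerate}
\item If $S=N$, then $R'=R''=R^{0N}$ and there is nothing to prove.
\item Otherwise, define the restricted rule $h^S:\mathcal{S}^{|N\setminus S|}\to A$ by $h^S(\tilde R)=f(R^{0S},\tilde R)$.
\item Strategy-proofness of $h^S$ is inherited from $f$, because deviations inside $\mathcal{S}$ are admissible deviations in $\mathcal{S}_0$.
\item Efficiency of $h^S$ is inherited from Claim~\ref{Claim 1}. Indeed, $f(R^{0S},\tilde R)\in[\underline{\tau}(\tilde R),\overline{\tau}(\tilde R)]$, and on $\mathcal{S}$ this interval is exactly the efficient set for the agents in $N\setminus S$.
\item Lemma~\ref{L1} (Weymark's result) then shows that $h^S$ is tops-only, which gives $f(R')=h^S(R'_{N\setminus S})=h^S(R''_{N\setminus S})=f(R'')$.
\end{enumerate}

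The only step needing care is the efficiency transfer in step~4. Efficiency of $f$ on $\mathcal{S}_0^n$ at $(R^{0S},\tilde R)$ must coincide with efficiency of $h^S$ at $\tilde R$ on $\mathcal{S}$. This holds because agents in $S$ are indifferent among all alternatives, so they never block a Pareto improvement. Both notions therefore reduce to membership in $[\underline{\tau}(\tilde R),\overline{\tau}(\tilde R)]$, which Claim~\ref{Claim 1} guarantees.

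Beyond this check, I expect no real obstacle: the substantive work (tops-onlyness on the pure single-peaked domain) is outsourced to Lemma~\ref{L1}. Note that pairwise strategy-proofness is needed only indirectly, through Claim~\ref{Claim 1}. Individual strategy-proofness alone does not yield efficiency, as Example~\ref{ex1} shows, so the claim genuinely depends on the pairwise condition.
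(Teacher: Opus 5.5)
Your proposal is correct and follows the paper's proof: efficiency comes from Claim~\ref{Claim 1}, strategy-proofness is the $|S|=1$ case of pairwise strategy-proofness, and Proposition~\ref{prop}(a) then gives tops-onlyness. Your unpacking of Proposition~\ref{prop}(a), via the restricted rule $h^S$ and Lemma~\ref{L1}, matches the paper's argument, with the transfer of efficiency to $h^S$ spelled out a little more explicitly.
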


\begin{proof}
From Claim~\ref{Claim 1}, it follows that $f$ is efficient. Moreover, since $f$ is pairwise strategy-proof, it is strategy-proof. Hence, Proposition~\ref{prop} implies that $f$ is tops-only.
\end{proof}

Before proceeding, we note the following claim, which will be used in what follows.

\begin{claim}\rm \label{Claim 3}
Let $R \in \mathcal{S}_0^n$ be such that, for some agents $i,j \in N$, we have $R_i, R_j \neq R^0$, $\tau(R_i) = 0$, and $\tau(R_j) = 1$. Then, for any agent $k \in N \setminus \{i,j\}$ and any $R'_k \in \mathcal{S}_0$, we have $f(R'_k, R_{-k}) = f(R)$.
\end{claim}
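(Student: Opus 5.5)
The plan is to handle first the case in which one of the two reports of agent $k$ is the completely indifferent preference $R^0$. The general case then follows by passing through $R^0$. The argument uses only pairwise strategy-proofness and the fact that agents $i$ and $j$ sit at the two extremes of $A$. Because of those extreme peaks, any change in the outcome strictly benefits one of them, whichever direction it goes.

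\textbf{Step 1 (one report is $R^0$).} Fix $k\in N\setminus\{i,j\}$ and $R'_k\in\mathcal{S}_0$. Suppose first that $R_k=R^0$, and let $x=f(R)$ and $y=f(R'_k,R_{-k})$. Suppose for contradiction that $x\neq y$.
\begin{itemize}
\item If $x<y$, consider the coalition $\{k,j\}$ at the true profile $R$, where $k$ misreports $R'_k$ and $j$ reports $R_j$ truthfully. The outcome moves from $x$ to $y$. Agent $k$ is indifferent, since $y\,I_k\,x$ under $R^0$. Agent $j$ strictly gains: $\tau(R_j)=1$ and $x<y\le 1$, so single-peakedness gives $y\,P_j\,x$. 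This contradicts pairwise strategy-proofness.
\item If $y<x$, use the coalition $\{k,i\}$ instead. Since $\tau(R_i)=0$ and $0\le y<x$, we get $y\,P_i\,x$.
\end{itemize}
Hence $f(R'_k,R_{-k})=f(R)$ whenever $R_k=R^0$. Applying the same argument with the roles of $R_k$ and $R'_k$ swapped covers the case $R'_k=R^0$. There the true profile is $(R'_k,R_{-k})$, and $k$ deviates to $R_k$. Agents $i$ and $j$ keep their preferences in both profiles, so their peaks are still $0$ and $1$ and the argument goes through unchanged.

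\textbf{Step 2 (general case).} For arbitrary $R_k,R'_k\in\mathcal{S}_0$, apply Step~1 twice. It gives
\[
f(R_k,R_{-k})=f(R^0,R_{-k})=f(R'_k,R_{-k}).
\]
The profile $(R^0,R_{-k})$ still has agents $i$ and $j$ with peaks $0$ and $1$, so Step~1 applies to each of the two changes of $k$'s report.

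The only delicate point is Step~1: we need a coalition in which agent $k$ is \emph{weakly} better off. For a single-peaked agent $k$ this cannot be guaranteed, because individual strategy-proofness only tells us that $k$ weakly prefers the truthful outcome. The device of routing through $R^0$ removes this obstacle, since under $R^0$ agent $k$'s indifference is automatic. The requirement $k\notin\{i,j\}$ is what allows us to pair $k$ with a distinct agent whose peak lies at the appropriate endpoint. The hypothesis $n\geq 3$ is implicitly needed for such a $k$ to exist.
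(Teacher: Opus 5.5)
Your proof is correct and follows essentially the same route as the paper. The paper's Cases 1 and 2 are your Step 1: when one of $k$'s reports is $R^0$, pair the indifferent agent $k$ with whichever of $i$ (peak $0$) or $j$ (peak $1$) strictly gains from the change in outcome. Its Case 3 is your Step 2, which passes through the profile $(R^0, R_{-k})$.
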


\begin{proof} Fix a profile $R \in \mathcal{S}_0^n$ and agents $i,j \in N$ such that $R_i, R_j \neq R^0$, $\tau(R_i)=0$, and $\tau(R_j)=1$. Let $k \in N \setminus \{i,j\}$ be arbitrary, and let $R'_k \in \mathcal{S}_0$ be any preference. We show that $f(R'_k, R_{-k}) = f(R)$. We consider following cases.

Case $1$: $R_k\neq R^0$ and $R'_k=R^0$. Suppose $f(R)<f(R'_k, R_{-k})$. Consider coalition $\{i,k\}$, the profile $(R_i,R'_k,R_{-\{i,k\}})$ and the profile for the coalition $\{i,k\}$, $(R_i,R_k)$. Note that $$f(R_i,R_k,R_{-\{i,k\}})\;P_i\;f(R_i,R'_k,R_{-\{i,k\}})$$ and $f(R_i,R_k,R_{-\{i,k\}})\;I'_k\;f(R_i,R'_k,R_{-\{i,k\}})$. This contradicts Pairwise strategy-proofness. Suppose $f(R'_k, R_{-k})<f(R)$. Now consider coalition $\{j,k\}$, the profile $(R_j,R'_k,R_{-\{j,k\}})$ and the profile for the coalition $\{j,k\}$, $(R_j,R_k)$. Note that $$f(R_j,R_k,R_{-\{j,k\}})\;P_j\;f(R_j,R'_k,R_{-\{j,k\}})$$ and $f(R_j,R_k,R_{-\{j,k\}})\;I'_k\;f(R_j,R'_k,R_{-\{j,k\}})$. This contradicts Pairwise strategy-proofness. Therefore, $f(R'_k, R_{-k}) = f(R)$.

Case $2$: $R_k= R^0$ and $R'_k\neq R^0$. Suppose $f(R'_k, R_{-k})<f(R)$. Consider coalition $\{i,k\}$, the profile $(R_i,R_k,R_{-\{i,k\}})$ and the profile for the coalition $\{i,k\}$, $(R_i,R'_k)$. Note that $$f(R_i,R'_k,R_{-\{i,k\}})\;P_i\;f(R_i,R_k,R_{-\{i,k\}})$$ and $f(R_i,R'_k,R_{-\{i,k\}})\;I_k\;f(R_i,R_k,R_{-\{i,k\}})$. This contradicts Pairwise strategy-proofness. Suppose $f(R)<f(R'_k, R_{-k})$. Now consider coalition $\{j,k\}$, the profile $(R_j,R_k,R_{-\{j,k\}})$ and the profile for the coalition $\{j,k\}$, $(R_j,R'_k)$. Note that $$f(R_j,R'_k,R_{-\{j,k\}})\;P_j\;f(R_j,R_k,R_{-\{j,k\}})$$ and $f(R_j,R'_k,R_{-\{j,k\}})\;I_k\;f(R_j,R_k,R_{-\{j,k\}})$. This contradicts Pairwise strategy-proofness. Therefore, $f(R'_k, R_{-k}) = f(R)$.

Case $3$: $R_k\neq R^0$ and $R'_k\neq R^0$. First, at $R$, we replace agent $k$'s preference by $R_k''$ where $R_k''=R^0$. By the argument used in Case~1, it follows that $f(R''_k, R_{-k}) = f(R)$. Next, at at the profile $(R''_k, R_{-k})$, replace agent $k$'s preference by $R_k'$. By the argument used in Case~2, we obtain $f(R'_k, R_{-k}) = f(R''_k, R_{-k})$. Therefore, $f(R'_k, R_{-k}) = f(R''_k, R_{-k})=f(R)$.
\end{proof}

We now proceed to complete the proof. Let $\bar{R} \in \mathcal{S}_0^n$ be such that $\bar{R}_i \neq R^0$ for all $i \in N$, $\tau(\bar{R}_i) = 0$ for all $i \in \{1,2,\ldots,n-1\}$, and $\tau(\bar{R}_n) = 1$. Let $f(\bar{R}) = x$ and $f(R^{0N}) = y$. We will show that $f$ is a target rule with default, where $x$ is the target level and $y$ is the default alternative. The following claims complete the proof.

\begin{claim}\rm \label{Claim 4}
Let $i \in N$ be any agent, and let $R = (R_i, R_{-i}) \in \mathcal{S}_0^n$ be a preference profile such that $R_k \neq R^0$ for all $k \in N$, $\tau(R_i) = 1$, and $\tau(R_j) = 0$ for all $j \in N \setminus \{i\}$. Then $f(R) = x$.
\end{claim}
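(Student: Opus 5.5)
The plan is to get from $\bar R$ to $R$ by changing one agent's preference at a time. Claim~\ref{Claim 3} keeps the outcome fixed at every step, and tops-onlyness (Claim~\ref{Claim 2}) handles the preferences that are not pinned down beyond their peaks.

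First, a reduction. By Claim~\ref{Claim 2}, $f$ is tops-only. So $f(R)$ depends only on the set of completely indifferent agents, which is empty here, and on the peaks. It therefore suffices to prove the statement for one particular profile with agent $i$ at peak $1$ and every other agent at peak $0$. If $i=n$, this profile has the same peaks as $\bar R$, so tops-onlyness gives $f(R)=f(\bar R)=x$ directly.

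Now suppose $i\neq n$. Because $n\ge 3$, there is an agent $m\in N\setminus\{i,n\}$, and $\tau(\bar R_m)=0$. I would carry out two swaps.

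\textbf{Step 1.} At $\bar R$, agents $m$ and $n$ have peaks $0$ and $1$, and $i\notin\{m,n\}$. Apply Claim~\ref{Claim 3} with the pair $(m,n)$ and $k=i$. Replacing $\bar R_i$ by any single-peaked $R'_i$ with $\tau(R'_i)=1$ leaves the outcome unchanged. Call the new profile $\tilde R=(R'_i,\bar R_{-i})$; then $f(\tilde R)=x$.

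\textbf{Step 2.} In $\tilde R$, agent $m$ still has peak $0$ and agent $i$ now has peak $1$, while $n\notin\{m,i\}$. Apply Claim~\ref{Claim 3} with the pair $(m,i)$ and $k=n$. Replacing $\bar R_n$ by a single-peaked $R'_n$ with $\tau(R'_n)=0$ again leaves the outcome unchanged. The resulting profile has agent $i$ at peak $1$ and everyone else at peak $0$, and its outcome is $x$.

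Finally, tops-onlyness transfers this conclusion to the given profile $R$, so $f(R)=x$.

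The only delicate point is making sure Claim~\ref{Claim 3} is applicable at each step. The deviating agent must lie outside the pair that holds the peaks $0$ and $1$. This is exactly why the third agent $m$ (which needs $n\ge3$) is used: it serves as a fixed anchor at peak $0$ while $i$ and $n$ are swapped in turn. No further pairwise strategy-proofness arguments should be needed beyond those already packaged in Claim~\ref{Claim 3}.
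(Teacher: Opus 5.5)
Your proposal is correct and is essentially the paper's proof. Starting from $\bar R$, you use Claim~\ref{Claim 3} first to change agent $i$'s preference, anchored by a third agent at peak $0$ and agent $n$ at peak $1$, and then to change agent $n$'s preference, anchored by that third agent and agent $i$; this is exactly where $n\ge 3$ enters. Your explicit appeal to tops-onlyness (Claim~\ref{Claim 2}) at the start and end is a small improvement, because the paper writes $R=\bar R$ and $R=(R_i,R_n,\bar R_{-\{i,n\}})$ even though these profiles in general agree only in peaks.
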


\begin{proof} Fix any agent $i \in N$ and consider a profile 
$R = (R_i, R_{-i}) \in \mathcal{S}_0^n$ such that 
$R_k \neq R^0$ for all $k \in N$, $\tau(R_i) = 1$, and 
$\tau(R_j) = 0$ for all $j \in N \setminus \{i\}$. We show that $f(R)=x$. If $i = n$, then, $R=\bar{R}$. Hence, $f(R)=x$. So, we assume that $i \neq n$. 

At the profile $\bar{R}$, we replace agent $i$'s preference by $R_i$. From Claim \ref{Claim 3}, we have that $f(R_i, \bar{R}_{-i})=f(\bar{R})=x$. Now, at the profile $(R_i, \bar{R}_{-i})$, we replace agent $n$'s preference by $R_n$. From Claim \ref{Claim 3}, we have that $f(R_i,R_n, \bar{R}_{-\{i,n\}})=f(R_i, \bar{R}_{-i})$. Therefore, $f(R_i,R_n, \bar{R}_{-\{i,n\}})=x$. Note that $R=(R_i,R_n, \bar{R}_{-\{i,n\}})$. Hence, $f(R)=f(R_i,R_n, \bar{R}_{-\{i,n\}})=x$.
\end{proof}

\begin{claim}\rm \label{Claim 5}
Let $R \in \mathcal{S}_0^n$ with $R \neq R^{0N}$ and
$\underline{\tau}(R) \leq x \leq \overline{\tau}(R)$. Then $f(R) = x$.
\end{claim}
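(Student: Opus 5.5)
The plan is to start from a profile where Claim~\ref{Claim 4} already gives the outcome $x$, and transform it step by step into $R$. Claim~\ref{Claim 3} handles agents other than the ones holding the extreme peaks $0$ and $1$. A one-agent strategy-proofness argument then moves those two agents' peaks inward. Throughout, Claim~\ref{Claim 2} (tops-only) lets me choose convenient single-peaked preferences with prescribed peaks. Claim~\ref{Claim 1} (efficiency) disposes of the degenerate case.

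\emph{Degenerate case.} Suppose $\underline{\tau}(R)=\overline{\tau}(R)$. This covers the case of exactly one non-indifferent agent. Then $\underline{\tau}(R)\le x\le\overline{\tau}(R)$ forces every reported peak to equal $x$. Claim~\ref{Claim 1} gives $f(R)\in[\underline{\tau}(R),\overline{\tau}(R)]=\{x\}$. So assume $\underline{\tau}(R)<\overline{\tau}(R)$, and pick distinct non-indifferent agents $l,h$ with $\tau(R_l)=\underline{\tau}(R)\le x$ and $\tau(R_h)=\overline{\tau}(R)\ge x$.

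\emph{Step 1.} Let $Q$ be a profile in which all agents are single-peaked, $\tau(Q_h)=1$, and $\tau(Q_j)=0$ for $j\neq h$. By Claim~\ref{Claim 4}, $f(Q)=x$.

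\emph{Step 2.} One at a time, replace the preference of each agent $k\notin\{l,h\}$ by $R_k$; this may be $R^0$. At every stage agent $l$ still has peak $0$ and agent $h$ still has peak $1$. So Claim~\ref{Claim 3} applies with $i=l$, $j=h$, and the outcome stays $x$. We reach $Q'=(Q_l,Q_h,R_{-\{l,h\}})$ with $f(Q')=x$.

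\emph{Step 3.} Move agent $l$ from peak $0$ to her true preference $R_l$ with $t=\tau(R_l)\le x$, and let $z=f(R_l,Q'_{-l})$.
If $z<x$, then agent $l$ with preference $Q_l$ (peak $0$) strictly prefers $z$ to $x$, since $0\le z<x$. She could profitably report $R_l$, contradicting strategy-proofness.
If $z>x$, then $t\le x<z$, so $x\,P_l\,z$ under $R_l$. Agent $l$ at $(R_l,Q'_{-l})$ could profitably report $Q_l$, again a contradiction.
Hence $z=x$. The symmetric argument moves agent $h$ from peak $1$ to $R_h$ with $\tau(R_h)\ge x$: an outcome $z>x$ would be preferred under peak $1$, and $z<x$ would be worse under $R_h$. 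The resulting profile is exactly $R$, so $f(R)=x$.

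\emph{Expected obstacle.} The delicate point is Step 2. Claim~\ref{Claim 3} needs two agents with peaks exactly $0$ and $1$ to be present at every intermediate profile. This is why $l$ and $h$ are kept at the extreme peaks until all other agents have been switched, and only then moved inward in Step 3. Here $n\ge 3$ is used implicitly, through Claims~\ref{Claim 3} and~\ref{Claim 4}. I would also check that the strategy-proofness steps in Step 3 need no particular preference shape beyond the peaks. They need none, since each comparison involves two points on the same side of the peak.
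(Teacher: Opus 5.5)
Your proof is correct and follows the paper's argument step for step: the degenerate case by efficiency, then Claim~\ref{Claim 4} to start at $x$, then Claim~\ref{Claim 3} to switch the agents outside $\{l,h\}$, then one-agent strategy-proofness to move the two extreme agents inward. The only difference is that you move $l$ and $h$ directly to $R_l$ and $R_h$, whereas the paper moves them to specially skewed preferences $R''_i,R''_j$ and then invokes tops-onlyness (Claim~\ref{Claim 2}); every comparison in these steps lies on one side of the relevant peak, so the skew is unnecessary and your version avoids Claim~\ref{Claim 2} entirely, even though your opening remark says you would rely on it.
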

\begin{proof} Let $R \in \mathcal{S}_0^n$ with $R \neq R^{0N}$ and
$\underline{\tau}(R) \leq x \leq \overline{\tau}(R)$. We show that $f(R)=x$. If $\underline{\tau}(R)=\overline{\tau}(R)$, then by efficiency of $f$ (Claim~\ref{Claim 1}), we have $f(R)=x$. Hence, assume that $\underline{\tau}(R)\neq \overline{\tau}(R)$.
Let $i,j \in N$ be such that $R_i, R_j \neq R^0$, $\tau(R_i)=\underline{\tau}(R)$, and $\tau(R_j)=\overline{\tau}(R)$.

We begin with a profile $R' \in \mathcal{S}_0^n$ such that $R'_k \neq R^0$ for all $k \in N$, $\tau(R'_j) = 1$, and $\tau(R'_l) = 0$ for all $l \in N \setminus \{j\}$. By Claim~\ref{Claim 4}, we have $f(R') = x$.

Starting from $R'$, we replace the preferences of agents in $N \setminus \{i,j\}$ sequentially, one agent at a time, by their corresponding preferences in $R$. By Claim~\ref{Claim 3}, the outcome remains unchanged at each replacement step. After all such replacements are completed, we obtain $f(R'_i, R'_j, R_{-\{i,j\}}) = x$.

Now consider the profile $(R'_i, R'_j, R_{-\{i,j\}})$ and replace agent $i$'s preference by $R''_i$, where $R''_i \neq R^0$, $\tau(R''_i)=\underline{\tau}(R)$, and for all $w<\underline{\tau}(R)$ and $z>\underline{\tau}(R)$, we have $w \; P''_i \; z$. If $f(R''_i, R'_j, R_{-\{i,j\}}) < x$, then
\[
f(R''_i, R'_j, R_{-\{i,j\}})\; P'_i \; f(R'_i, R'_j, R_{-\{i,j\}}),
\]
contradicting strategy-proofness. If $x < f(R''_i, R'_j, R_{-\{i,j\}})$, then
\[
f(R'_i, R'_j, R_{-\{i,j\}})\; P''_i \; f(R''_i, R'_j, R_{-\{i,j\}}),
\]
again contradicting strategy-proofness. Therefore, $f(R''_i, R'_j, R_{-\{i,j\}}) = x$.

Next, at $(R''_i, R'_j, R_{-\{i,j\}})$, replace agent $j$'s preference by $R''_j$, where $R''_j \neq R^0$, $\tau(R''_j)=\overline{\tau}(R)$, and for all $w<\overline{\tau}(R)$ and $z>\overline{\tau}(R)$, we have $z \; P''_j \; w$. If $f(R''_i, R''_j, R_{-\{i,j\}}) < x$, then
\[
f(R''_i, R'_j, R_{-\{i,j\}})\; P''_j \; f(R''_i, R''_j, R_{-\{i,j\}}),
\]
contradicting strategy-proofness. If $x < f(R''_i, R''_j, R_{-\{i,j\}})$, then
\[
f(R''_i, R''_j, R_{-\{i,j\}})\; P'_j \; f(R''_i, R'_j, R_{-\{i,j\}}),
\]
again contradicting strategy-proofness. Hence, $f(R''_i, R''_j, R_{-\{i,j\}}) = x$.

Finally, by tops-onlyness of $f$ (Claim~\ref{Claim 2}), we obtain
$f(R''_i, R''_j, R_{-\{i,j\}}) = f(R)$, and therefore $f(R) = x$.
\end{proof}

\begin{claim}\rm \label{Claim 6}
Let $R \in \mathcal{S}_0^n$ with $R \neq R^{0N}$ and
$x < \underline{\tau}(R)$. Then $f(R) = \underline{\tau}(R)$.
\end{claim}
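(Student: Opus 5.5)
Let $a=\underline{\tau}(R)>x$. By efficiency (Claim~\ref{Claim 1}) $f(R)\ge a$, so it suffices to rule out $f(R)>a$. The plan is to start from a profile where the outcome is known to be $x$, namely one containing an agent with peak $0$ and an agent with peak $1$, and then drag that agent's peak from $0$ up to $a$ using strategy-proofness. Let $i$ be an agent with $R_i\neq R^0$ and $\tau(R_i)=a$, and let $j$ be an agent with $\tau(R_j)=\overline{\tau}(R)$. If $i=j$ is the only single-peaked agent, or all single-peaked peaks coincide, a separate short argument is needed, discussed below.

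\textbf{Main case.} Suppose there is a single-peaked agent $j\neq i$. Consider the profile $R'=(R'_i,R'_j,R_{-\{i,j\}})$ with $\tau(R'_i)=0$ and $\tau(R'_j)=1$. As in the proof of Claim~\ref{Claim 5}, starting from the profile of Claim~\ref{Claim 4} and using Claim~\ref{Claim 3} for the agents in $N\setminus\{i,j\}$, we get $f(R')=x$. Next replace $R'_j$ by $R_j$, whose peak is $\overline{\tau}(R)\ge a>x$. Strategy-proofness for $j$ gives $f(R'_i,R_j,R_{-\{i,j\}})=x$: the set $[0,\overline{\tau}(R)]$ contains $x$, so efficiency puts the outcome in $[0,\overline{\tau}(R)]$. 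If it were $z\neq x$, then either $z>x$, in which case agent $j$ with true preference $R_j$ would deviate to $R'_j$ and obtain $x$, so $j$ must weakly prefer $z$ to $x$ under $R_j$; or $z<x$, in which case the true preference $R'_j$ with peak $1$ strictly prefers $x$ to $z$. These combine into a monotonicity argument. If this is messy, I will instead first make $j$'s preference one that ranks everything above its peak over everything below, and then invoke tops-onlyness (Claim~\ref{Claim 2}).

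\textbf{Moving agent $i$.} Now change $R'_i$ (peak $0$) to $R_i$ (peak $a$). Let $z=f(R)$, with $z\ge a$ by efficiency, and suppose $z>a$. Agent $i$ at true preference $R'_i$ (peak $0$) gets $x$ under $R'_i$ but $z>x$ under $R_i$, so that is not a profitable deviation. The profitable deviation runs the other way: at true preference $R_i$ with peak $a$, reporting $R'_i$ yields $x<a$, whereas truth yields $z>a$. By tops-onlyness I may take $R_i$ to be a preference with peak $a$ that ranks every point below $a$ above every point above $a$, so $x\,P_i\,z$, contradicting strategy-proofness. Tops-onlyness then transfers the conclusion back to the original $R_i$. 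Hence $f(R)=a$.

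\textbf{Degenerate case and main obstacle.} If only one agent is single-peaked, or all peaks equal $a$, efficiency alone gives $f(R)=a$, since the efficient set is the singleton $\{a\}$. The main obstacle is the step that moves $j$ from peak $1$ to $\overline{\tau}(R)$ while keeping the outcome at $x$. I will handle it as in Claim~\ref{Claim 5}, using the two strategy-proofness inequalities together with the extremal preferences chosen via tops-onlyness. Note also that when $i$ has peak $0$, $x$ lies in the efficient interval, so the outcome $x$ there is consistent with Claim~\ref{Claim 5}.
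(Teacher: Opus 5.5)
Your proposal is correct and follows essentially the paper's route. You obtain $x$ at $(R'_i,R'_j,R_{-\{i,j\}})$ via Claims~\ref{Claim 4} and~\ref{Claim 3}, move $j$ from peak $1$ to peak $\overline{\tau}(R)$ while keeping the outcome at $x$, then move $i$ from peak $0$ to a preference with peak $\underline{\tau}(R)$ that ranks everything below the peak above everything above it, and finish with efficiency and tops-onlyness. Using the actual $R_j$ instead of the paper's extremal $R''_j$ is fine, since $x<\overline{\tau}(R)$.

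One slip needs fixing in the $j$-step. You attached the wrong preference to each case, so as written neither case yields a contradiction. The case $z>x$ is excluded because the peak-$1$ preference $R'_j$ strictly prefers $z$ to $x$. The case $z<x$ is excluded because $z<x<\overline{\tau}(R)$ gives $x\,P_j\,z$.
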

\begin{proof} Let $R \in \mathcal{S}_0^n$ with $R \neq R^{0N}$ and
$x <\underline{\tau}(R)$. We show that $f(R)=\underline{\tau}(R)$. If $\underline{\tau}(R)=\overline{\tau}(R)$, then by efficiency of $f$ (Claim~\ref{Claim 1}), we have $f(R)=\underline{\tau}(R)$. Hence, assume that $\underline{\tau}(R)\neq \overline{\tau}(R)$.
Let $i,j \in N$ be such that $R_i, R_j \neq R^0$, $\tau(R_i)=\underline{\tau}(R)$, and $\tau(R_j)=\overline{\tau}(R)$.

We begin with a profile $R' \in \mathcal{S}_0^n$ such that $R'_k \neq R^0$ for all $k \in N$, $\tau(R'_j) = 1$, and $\tau(R'_l) = 0$ for all $l \in N \setminus \{j\}$. By Claim~\ref{Claim 4}, we have $f(R') = x$.

Starting from $R'$, we replace the preferences of agents in $N \setminus \{i,j\}$ sequentially, one agent at a time, by their corresponding preferences in $R$. By Claim~\ref{Claim 3}, the outcome remains unchanged at each replacement step. After all such replacements are completed, we obtain $f(R'_i, R'_j, R_{-\{i,j\}}) = x$.

Now consider the profile $(R'_i, R'_j, R_{-\{i,j\}})$ and replace agent $j$'s preference by $R''_j$, where $R''_j \neq R^0$, $\tau(R''_j)=\overline{\tau}(R)$, and for all $w<\overline{\tau}(R)$ and $z>\overline{\tau}(R)$, we have $z \; P''_j \; w$. If $f(R'_i, R''_j, R_{-\{i,j\}}) < x$, then
\[
f(R'_i, R'_j, R_{-\{i,j\}})\; P''_j \; f(R'_i, R''_j, R_{-\{i,j\}}),
\]
contradicting strategy-proofness. If $x < f(R'_i, R''_j, R_{-\{i,j\}})$, then
\[
f(R'_i, R''_j, R_{-\{i,j\}})\; P'_j \; f(R'_i, R'_j, R_{-\{i,j\}}),
\]
again contradicting strategy-proofness. Therefore, $f(R'_i, R''_j, R_{-\{i,j\}}) = x$.

Next, at $(R'_i, R''_j, R_{-\{i,j\}})$, replace agent $i$'s preference by $R''_i$, where $R''_i \neq R^0$, $\tau(R''_i)=\underline{\tau}(R)$, and for all $w<\underline{\tau}(R)$ and $z>\underline{\tau}(R)$, we have $w \; P''_i \; z$. Note that
\[
\underline{\tau}((R''_i, R''_j, R_{-\{i,j\}}))=\underline{\tau}(R).
\]
By efficiency of $f$ (Claim~\ref{Claim 1}), it cannot be that
$f(R''_i, R''_j, R_{-\{i,j\}}) < \underline{\tau}(R)$. If
$\underline{\tau}(R) < f(R''_i, R''_j, R_{-\{i,j\}})$, then
\[
f(R'_i, R''_j, R_{-\{i,j\}})\; P''_i \;
f(R''_i, R''_j, R_{-\{i,j\}}),
\]
contradicting strategy-proofness. Hence, $f(R''_i, R''_j, R_{-\{i,j\}}) = \underline{\tau}(R)$.

Finally, by tops-onlyness of $f$ (Claim~\ref{Claim 2}), we obtain
$f(R''_i, R''_j, R_{-\{i,j\}}) = f(R)$, and therefore $f(R) = \underline{\tau}(R)$.
\end{proof}

\begin{claim}\rm \label{Claim 7}
Let $R \in \mathcal{S}_0^n$ with $R \neq R^{0N}$ and
$\overline{\tau}(R) < x$. Then $f(R) = \overline{\tau}(R)$.
\end{claim}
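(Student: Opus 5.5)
The argument is the mirror image of the proof of Claim~\ref{Claim 6}. Take $R \neq R^{0N}$ with $\overline{\tau}(R) < x$. If $\underline{\tau}(R)=\overline{\tau}(R)$, efficiency (Claim~\ref{Claim 1}) gives $f(R)=\overline{\tau}(R)$ at once. Otherwise, pick agents $i,j$ with $R_i,R_j\neq R^0$, $\tau(R_i)=\underline{\tau}(R)$ and $\tau(R_j)=\overline{\tau}(R)$.

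\textbf{Step 1: reach $x$ at a profile agreeing with $R$ outside $\{i,j\}$.} Start from $R'$ with all agents single-peaked, $\tau(R'_j)=1$ and $\tau(R'_l)=0$ for $l\neq j$. Claim~\ref{Claim 4} gives $f(R')=x$. Replace the agents in $N\setminus\{i,j\}$ one at a time by their preferences in $R$. Throughout, $i$ has peak $0$ and $j$ has peak $1$, so Claim~\ref{Claim 3} keeps the outcome fixed, and $f(R'_i,R'_j,R_{-\{i,j\}})=x$.

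\textbf{Step 2: move agent $i$ without changing the outcome.} Replace $R'_i$ by $R''_i$ with $\tau(R''_i)=\underline{\tau}(R)$ and $w\,P''_i\,z$ whenever $w<\underline{\tau}(R)<z$. Since $\underline{\tau}(R)\le\overline{\tau}(R)<x$, the point $x$ lies to the right of the peaks of both $R'_i$ and $R''_i$. Any move of the outcome below $x$ would benefit $i$ under $R'_i$, whose peak is $0$. Any move above $x$ would benefit $i$ under $R''_i$: reversing the deviation moves the outcome from above $x$ down to $x$, both to the right of the peak. Strategy-proofness rules out both, so the outcome stays $x$.

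\textbf{Step 3: move agent $j$ and pin down the outcome.} Replace $R'_j$ by $R''_j$ with $\tau(R''_j)=\overline{\tau}(R)$ and $z\,P''_j\,w$ whenever $w<\overline{\tau}(R)<z$. The new profile $(R''_i,R''_j,R_{-\{i,j\}})$ has largest peak $\overline{\tau}(R)$, so efficiency forbids an outcome above $\overline{\tau}(R)$. Suppose instead the outcome $a$ is below $\overline{\tau}(R)$. Then agent $j$ with true preference $R''_j$ could report $R'_j$ and obtain $x$. Because $a<\overline{\tau}(R)<x$, the outcomes $a$ and $x$ lie on opposite sides of the peak. By the choice of $R''_j$, which ranks every point right of the peak above every point left of it, $x\,P''_j\,a$. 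This contradicts strategy-proofness, so the outcome equals $\overline{\tau}(R)$.

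\textbf{Step 4: return to $R$.} Tops-onlyness (Claim~\ref{Claim 2}) transfers the conclusion from $(R''_i,R''_j,R_{-\{i,j\}})$ to $R$, giving $f(R)=\overline{\tau}(R)$.

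The main obstacle is Step 3, and specifically the order of the two replacements. Agent $i$ must be moved first, while $j$ still reports peak $1$, so that $x$ is preserved. Agent $j$ must be moved last, with the skewed preference $R''_j$, so that the peak-crossing comparison produces the contradiction.
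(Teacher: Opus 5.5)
Your proof is correct and follows the paper's argument essentially step for step. You start from the profile of Claim~\ref{Claim 4}, keep the outcome at $x$ via Claim~\ref{Claim 3} while replacing agents outside $\{i,j\}$, move $i$ to the left-skewed $R''_i$ with the outcome staying at $x$, move $j$ to the right-skewed $R''_j$ where efficiency and strategy-proofness force $\overline{\tau}(R)$, and finish with tops-onlyness. One small quibble: your closing claim that the order of the two replacements is forced is overstated. Moving $j$ first also works: efficiency and the skewness of $R''_j$ give $\overline{\tau}(R)$, and then $i$'s peak-$0$ preference $R'_i$ blocks any drop below $\overline{\tau}(R)$ when $i$ switches to $R''_i$. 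This does not affect the validity of your argument.
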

\begin{proof}Let $R \in \mathcal{S}_0^n$ with $R \neq R^{0N}$ and
$\overline{\tau}(R)< x $. We show that $f(R)=\overline{\tau}(R)$. If $\underline{\tau}(R)=\overline{\tau}(R)$, then by efficiency of $f$ (Claim~\ref{Claim 1}), we have $f(R)=\overline{\tau}(R)$. Hence, assume that $\underline{\tau}(R)\neq \overline{\tau}(R)$.
Let $i,j \in N$ be such that $R_i, R_j \neq R^0$, $\tau(R_i)=\underline{\tau}(R)$, and $\tau(R_j)=\overline{\tau}(R)$.

We begin with a profile $R' \in \mathcal{S}_0^n$ such that $R'_k \neq R^0$ for all $k \in N$, $\tau(R'_j) = 1$, and $\tau(R'_l) = 0$ for all $l \in N \setminus \{j\}$. By Claim~\ref{Claim 4}, we have $f(R') = x$.

Starting from $R'$, we replace the preferences of agents in $N \setminus \{i,j\}$ sequentially, one agent at a time, by their corresponding preferences in $R$. By Claim~\ref{Claim 3}, the outcome remains unchanged at each replacement step. After all such replacements are completed, we obtain $f(R'_i, R'_j, R_{-\{i,j\}}) = x$.

Now consider the profile $(R'_i, R'_j, R_{-\{i,j\}})$ and replace agent $i$'s preference by $R''_i$, where $R''_i \neq R^0$, $\tau(R''_i)=\underline{\tau}(R)$, and for all $w<\underline{\tau}(R)$ and $z>\underline{\tau}(R)$, we have $w \; P''_i \; z$. If $f(R''_i, R'_j, R_{-\{i,j\}}) < x$, then
\[
f(R''_i, R'_j, R_{-\{i,j\}})\; P'_i \; f(R'_i, R'_j, R_{-\{i,j\}}),
\]
contradicting strategy-proofness. If $x < f(R''_i, R'_j, R_{-\{i,j\}})$, then
\[
f(R'_i, R'_j, R_{-\{i,j\}})\; P''_i \; f(R''_i, R'_j, R_{-\{i,j\}}),
\]
again contradicting strategy-proofness. Therefore, $f(R''_i, R'_j, R_{-\{i,j\}}) = x$.

Next, at $(R''_i, R'_j, R_{-\{i,j\}})$, replace agent $j$'s preference by $R''_j$, where $R''_j \neq R^0$, $\tau(R''_j)=\overline{\tau}(R)$, and for all $w<\overline{\tau}(R)$ and $z>\overline{\tau}(R)$, we have $z \; P''_j \; w$.  Note that
\[
\overline{\tau}((R''_i, R''_j, R_{-\{i,j\}}))=\overline{\tau}(R).
\]
By efficiency of $f$ (Claim~\ref{Claim 1}), it cannot be that
$\overline{\tau}(R)< f(R''_i, R''_j, R_{-\{i,j\}})$. If $f(R''_i, R''_j, R_{-\{i,j\}}) < \overline{\tau}(R)$, then
\[
f(R''_i, R'_j, R_{-\{i,j\}})\; P''_j \; f(R''_i, R''_j, R_{-\{i,j\}}),
\]
contradicting strategy-proofness. Hence, $f(R''_i, R''_j, R_{-\{i,j\}}) = \overline{\tau}(R)$.

Finally, by tops-onlyness of $f$ (Claim~\ref{Claim 2}), we obtain
$f(R''_i, R''_j, R_{-\{i,j\}}) = f(R)$, and therefore $f(R) = \overline{\tau}(R)$.
\end{proof}

(\textbf{Only-if part}) Let $f : \mathcal{S}_0^n \to A$ be a target rule with a default, with target level $x$ and default alternative $y$. We show that $f$ is pairwise strategy-proof. In fact, the following lemma establishes the stronger property that $f$ is group strategy-proof, from which pairwise strategy-proofness follows immediately.

\begin{lemma}\label{L2}
$f$ is group strategy-proof.
\end{lemma}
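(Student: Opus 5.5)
We argue by contradiction. Suppose there are a coalition $S\subseteq N$, a profile $R\in\mathcal{S}_0^n$ and a report $R'_S$ such that $b:=f(R'_S,R_{-S})$ satisfies $b\,R_i\,a$ for all $i\in S$ and $b\,P_j\,a$ for some $j\in S$, where $a:=f(R)$. Then $a\neq b$, and the strictly gaining agent $j$ satisfies $R_j\neq R^0$, so $R\neq R^{0N}$. Agents in $S$ with $R_i=R^0$ place no restriction on the deviation. We may therefore restrict attention to $T=\{i\in S: R_i\neq R^0\}\neq\emptyset$. Every $i\in T$ is single-peaked and weakly prefers $b$ to $a$.

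\textbf{Step 1 (where $a$ lies).} By the definition of the rule, $a\in[\underline{\tau}(R),\overline{\tau}(R)]$. Split into three cases according to the definition: (i) $a=x$ with $x\in[\underline{\tau}(R),\overline{\tau}(R)]$; (ii) $a=\underline{\tau}(R)>x$; (iii) $a=\overline{\tau}(R)<x$.

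\textbf{Step 2 (case (ii)).} Suppose $b<a$. At the new profile, the agents outside $S$ keep their reports. If some single-peaked agent in $N\setminus S$ remains, then $\underline{\tau}$ at the new profile is at most $b$, which forces a peak below $a$. This is impossible for agents outside $S$, because $a=\underline{\tau}(R)$. Hence the peak realizing $\underline{\tau}(R)$ belongs to some $k\in T$. Since $b<a=\tau(R_k)$, agent $k$ strictly prefers $a$ to $b$, a contradiction.

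Now suppose $b>a$. The agent $k$ attaining $\underline{\tau}(R)=a$ cannot lie outside $S$. Indeed, if it did, the new profile would have $\underline{\tau}\le a$ and $b>a>x$, so the rule would give $b\in\{x\}\cup\{\underline{\tau}\}$ with $b\le\max(x,\underline{\tau})\le a$, which is impossible. Hence $k\in T$, and $k$ is strictly worse off at $b$ because $b>\tau(R_k)=a$. This is again a contradiction.

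If $b$ is the default $y$, all agents must report $R^0$, so $N=S$ and $T$ is nonempty. The same argument with the minimal-peak agent $k\in T$ then rules out $b$ on either side of $a$. Case (iii) is symmetric to case (ii).

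\textbf{Step 3 (case (i), the main obstacle).} Here $a=x$ and there exist single-peaked agents with $\tau\le x\le\tau'$. Suppose $b<x$. Every agent in $T$ with peak at least $x$ is strictly hurt, so every such agent must lie outside $S$. Such an agent then keeps its peak $\ge x$ at the new profile, giving $\overline{\tau}\ge x$ there. Outcomes below $x$ arise only when $\overline{\tau}<x$, so $b\ge x$, a contradiction.

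The remaining subcase is that no single-peaked agent with peak $\ge x$ exists outside $S$. The agents with peak $\ge x$ then lie in $S$, and they must themselves be hurt by $b<x$ unless their peak equals $x$ exactly. This edge case needs care. If such an agent's peak equals $x$, that agent strictly prefers $x$ to $b$, which again gives a contradiction. The default outcome $b=y$ is handled as in Step 2: since all agents in $T$ are single-peaked, the agent whose peak lies on the far side of $x$ from $y$ loses.

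The case $b>x$ is symmetric. This completes the argument, so $f$ is group strategy-proof, and pairwise strategy-proofness follows immediately.
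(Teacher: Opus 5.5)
Your proof takes essentially the same route as the paper. You split according to which branch of the rule produces $a=f(R)$. Then, for each direction of the deviation, you exhibit a single-peaked member of $S$ whose peak lies weakly on the far side of $a$ from $b$, so that this member is strictly hurt. In Step~3 and in the $b>a$ part of Step~2 you make explicit something the paper leaves implicit: the extreme-peak agent cannot stay outside $S$, since otherwise the rule pushes $b$ back to the same side as $a$. That argument also covers the deviation to $R^{0N}$, so your separate treatment of the default outcome is redundant.

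The one faulty step is case (ii) with $b<a$. From ``a peak below $a$ at the new profile cannot belong to an agent outside $S$'' it does not follow that the agent realizing $\underline{\tau}(R)$ lies in $T$. The low peak may be reported by any member of $S$, including one with $R_i=R^0$. Nothing you say prevents the minimal-peak agent at $R$ from being outside $S$.

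You also do not need the claim. Since $a=\underline{\tau}(R)$, every single-peaked agent at $R$ has peak at least $a$. So every $k\in T$, in particular the strictly gaining agent $j$, satisfies $b<a\leq\tau(R_k)$ and strictly prefers $a$ to $b$. This is exactly how the paper closes this subcase.

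Two minor imprecisions:
\begin{itemize}
\item If the deviation leads to $R^{0N}$, you only learn that every agent outside $S$ is indifferent at $R$, not that $N=S$. That weaker fact is all you use.
\item In Step~3 there is no edge case at a peak of exactly $x$. Any agent with peak $\geq x$ strictly prefers $x$ to any $b<x$ by single-peakedness, so your ``remaining subcase'' is already covered by the first sentence of that step.
\end{itemize}
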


\begin{proof} Assume, for contradiction, that the statement is false. Then there exist a profile $R \in \mathcal{S}_0^n$, a coalition $S \subseteq N$, and $R'_S$ such that $f(R'_S, R_{-S}) \; R_i \; f(R)$ for all $i \in S$, and $f(R'_S, R_{-S}) \; P_j \; f(R)$ for some $j \in S$. Let $f(R) = w$ and $f(R'_S, R_{-S}) = z$. By our assumption that $z \; P_j \; w$, we have $R_j\neq R^0$ and $R \neq R^{0N}$. Since $x$ is the target level, one of the following cases must hold: (1) $\underline{\tau}(R)\leq x = w \leq \overline{\tau}(R)$, or (2) $x < \underline{\tau}(R) = w \leq \overline{\tau}(R)$, or (3) $\underline{\tau}(R)\leq w = \overline{\tau}(R) < x$. We consider these three cases to complete the proof.

\textit{Case 1:} $\underline{\tau}(R)\leq x = w \leq \overline{\tau}(R)$. First we consider that $(R_S',R_{-S})=R^{0N}$. Suppose $w<z$. Note that there exists an agent $i\in S$ such that $R_i\neq R^0$ and $\tau(R_i)=\underline{\tau}(R)$. For that agent $f(R)\;P_i\;f(R_S',R_{-S})$ because $\tau(R_i)\leq w<z$. This contradicts our assumption that $f(R'_S,R_{-S}) \; R_i \; f(R)$.   Suppose $z<w$. Note that there exists an agent $k\in S$ such that $R_k\neq R^0$ and $\tau(R_k)=\overline{\tau}(R)$. For that agent $f(R)\;P_k\;f(R_S',R_{-S})$ because $z<w\leq \tau(R_k)$.  Again, this contradicts our assumption that $f(R'_S,R_{-S}) \; R_k \; f(R)$.

Now we consider that $(R_S',R_{-S})\ne R^{0N}$. Let $w<z$. Since $x$ is the target level, we have that $f(R_S',R_{-S})=\underline{\tau}(R_S',R_{-S})=z$. Then there exists an agent $i\in S$ such that $R_i\neq R^0$ and $\tau(R_i)=\underline{\tau}(R)$. For that agent $f(R)\;P_i\;f(R_S',R_{-S})$ because $\tau(R_i)\leq w<z$, a contradiction. Let $z<w$. Since $x$ is the target level, we have that $f(R_S',R_{-S})=\overline{\tau}(R_S',R_{-S})=z$. Then there exists an agent $k\in S$ such that $R_k\neq R^0$ and $\tau(R_k)=\overline{\tau}(R)$. For that agent $f(R)\;P_k\;f(R_S',R_{-S})$ because $z<w\leq \tau(R_k)$, a contradiction. 

\textit{Case 2:} $x < \underline{\tau}(R) = w \leq \overline{\tau}(R)$. First we consider that $(R_S',R_{-S})=R^{0N}$. Suppose $w<z$. Note that there exists an agent $i\in S$ such that $R_i\neq R^0$ and $\tau(R_i)=\underline{\tau}(R)$. For that agent $f(R)\;P_i\;f(R_S',R_{-S})$ because $\tau(R_i)\leq w<z$, a contradiction.  Suppose $z<w$. Then, for agent $j\in S$, $f(R)\;P_j\;f(R_S',R_{-S})$ because $z<w\leq \tau(R_j)$, a contradiction. 

Now we consider that $(R_S',R_{-S})\ne R^{0N}$. Let $w<z$. Since $x$ is the target level, we have that $f(R_S',R_{-S})=\underline{\tau}(R_S',R_{-S})=z$. Then there exists an agent $i\in S$ such that $R_i\neq R^0$ and $\tau(R_i)=\underline{\tau}(R)$. For that agent $f(R)\;P_i\;f(R_S',R_{-S})$ because $\tau(R_i)\leq w<z$, a contradiction. Let $z<w$. Then, for agent $j\in S$, $f(R)\;P_j\;f(R_S',R_{-S})$ because $z<w\leq \tau(R_j)$, a contradiction.

\textit{Case 3:} $\underline{\tau}(R)\leq w = \overline{\tau}(R) < x$. First we consider that $(R_S',R_{-S})=R^{0N}$. Suppose $w<z$. Then, for agent $j\in S$, $f(R)\;P_j\;f(R_S',R_{-S})$ because $\tau(R_j)\leq w<z$, a contradiction. Suppose $z<w$. Then there exists an agent $k\in S$ such that $R_k\neq R^0$ and $\tau(R_k)=\overline{\tau}(R)$. For that agent $f(R)\;P_k\;f(R_S',R_{-S})$ because $z<w\leq \tau(R_k)$, a contradiction.  

Now we consider that $(R_S',R_{-S})\ne R^{0N}$. Let $w<z$. In this case, for agent $j\in S$, $f(R)\;P_j\;f(R_S',R_{-S})$ because $\tau(R_j)\leq w<z$, a contradiction. Let $z<w$. Since $x$ is the target level, we have that $f(R_S',R_{-S})=\overline{\tau}(R_S',R_{-S})=z$. Then there exists an agent $k\in S$ such that $R_k\neq R^0$ and $\tau(R_k)=\overline{\tau}(R)$. For that agent $f(R)\;P_k\;f(R_S',R_{-S})$ because $z<w\leq \tau(R_k)$, a contradiction.

Therefore, $f$ is group strategy-proof. 
\end{proof}

This completes the proof of the theorem.
\end{proof}

\end{document}